\newcommand{\eq}[1]{Eq.~(\ref{eq.#1})} % Ref. to equation
\newcommand{\eqbare}[1]{(\ref{eq.#1})} % Ref. to equation number only
\newcommand{\fig}[1]{Fig.~\ref{fig.#1}}
\newcommand{\sect}[1]{Sec.~\ref{sect.#1}}
\newcommand{\sectlabel}[1]{\label{sect.#1}}
\newcommand{\eqlabel}[1]{\label{eq.#1}}
\newcommand{\figlabel}[1]{\label{fig.#1}}
\newtheorem{thm}{Theorem}
\newtheorem{clm}{Claim}
\newtheorem{eg}{Example}
\newcommand{\zero}{\ensuremath{\varnothing}}  % an alternative to \varnothing is \emptyset
\newcommand{\ket}[1]{\ensuremath{\left| #1 \right\rangle}} % \ensuremath allows \ket in either text or math modes
\newcommand{\nBits}{\ensuremath{p}} % number of bits for each bidder
\newcommand{\nBitsi}{\ensuremath{{p_{\rm item}}}} % number of bits for each bidder for items
\newcommand{\nBitsp}{\ensuremath{{p_{\rm price}}}} % number of bits for each bidder for price
\newcommand{\nBidders}{\ensuremath{n}} % number of bidders
\newcommand{\nItems}{\ensuremath{m}} % number of items
\newcommand{\psiInit}{\ensuremath{{\psi_{\rm init}}}} % initial state of bidder's qubits
\newcommand{\PsiInit}{\ensuremath{{\Psi_{\rm init}}}} % initial state of all qubits
\newcommand{\vBound}{\ensuremath{\bar{v}}} % upper bound on bidders' value distribution
\begin{document}

\title{Quantum Auctions}
\author{Tad Hogg \\ HP Labs \\ Palo Alto, CA \and Pavithra Harsha \\ MIT \\ Cambridge, MA \and Kay-Yut Chen \\ HP Labs \\ Palo Alto, CA}

\maketitle

\begin{abstract}
We present a quantum auction protocol using superpositions to
represent bids and distributed search to identify the winner(s).
Measuring the final quantum state gives the auction outcome while
simultaneously destroying the superposition. Thus non-winning bids
are never revealed. Participants can use entanglement to arrange for
correlations among their bids, with the assurance that this
entanglement is not observable by others. The protocol is useful for
information hiding applications, such as partnership bidding with
allocative externality or concerns about revealing bidding
preferences. The protocol applies to a variety of auction types,
e.g., first or second price, and to auctions involving either a
single item or arbitrary bundles of items (i.e., combinatorial
auctions). We analyze the game-theoretical behavior of the quantum
protocol for the simple case of a sealed-bid quantum, and show how a
suitably designed adiabatic search reduces the possibilities for
bidders to game the auction. This design illustrates how incentive
rather that computational constraints affect quantum algorithm
choices.
\end{abstract}

\newpage

\section{Introduction}

Quantum information processing~\cite{chuang00} offers potential
improvements in a variety of applications. Computational
advantages~\cite{shor94,grover96} of quantum computers with many
qubits have received the most attention but are difficult to
implement physically. On the other hand, technology for
manipulating and communicating just a few qubits could be
sufficient to create new economic mechanisms by altering the
information security and strategic incentives of the underlying
game.

Examples of quantum mechanisms include the prisoner's
dilemma~\cite{eisert98,eisert00,du01,du02},
coordination~\cite{huberman03,mura03} and public goods
provisioning~\cite{chen03}. In particular, a quantum mechanism can
significantly reduce the free-rider problem without a third-party
enforcer or repeated interactions, both in theory and
practice~\cite{chen06}.

In this paper, we examine quantum mechanisms for another economic
scenario: resource allocation by auction~\cite{wilson92}. While
traditional auction mechanisms can efficiently allocate resources
in many cases, quantum auction protocols offer improvements in
preserving privacy of the losing bids and dealing with scenarios
in which bidders care about what other bidders win when multiple
items are auctioned. Specifically, using quantum superpositions to
represent bids prevents the auctioneer and other bidders from
viewing the bids during the auction without disrupting the auction
process. Furthermore, the auction result reveals nothing but the
winning bid and allocation.

The first part of the paper introduces a general quantum auction
protocol for various pricing and allocation rules, multiple unit
auctions, combinatorial auctions and partnership bids. For
simplicity, we focus on the sealed-bid first-price auction. In this
auction, each bidder has one opportunity to submit a bid. The winner
is the highest bidder, who pays the amount bid for the item. This
auction has been well studied both theoretically~\cite{wilson92} and
experimentally~\cite{cox88,chen98}, and contrasts with iterative
auctions in which bidders can incrementally increase their bids
depending on how others bid.

If the auction is not well-matched to the bidders preferences, it
can introduce perverse incentives and result in poor outcomes, such
as lost revenue for the seller or economically inefficient
allocations where items are not allocated to those who value them
most. Thus it is important to examine incentives introduced with a
proposed auction design. In particular, our auction protocol
involves quantum search, which introduces incentive issues beyond
those examined in prior quantum games~\cite{eisert00}.

A full analysis of incentive issues is complicated, even for
classical auctions. In this paper we focus on two incentive issues
arising from the quantum auction protocol. The first incentive issue
arises from the possibility of manipulating the search outcome by
altering amplitudes associated with different bids. We show how to
revise an adiabatic search method to correct this incentive problem,
thereby preserving the classical Nash equilibrium. From a quantum
algorithm perspective, this construction of the search illustrates
how incentive issues affect algorithm design, in contrast to the
more common concern with computational efficiency in quantum
information processing.

Second, the quantum search for the highest bid is probabilistic,
i.e., does not always return the highest bid. While the probability
of finding the correct answer can be made as high as one wishes by
using more iterations of the search, the small residue probability
of awarding the item to someone other than the highest bidder may
change bidding behavior. As a step toward addressing the effect of
probabilistic outcomes, we show that, with sufficient steps in the
quantum search, altering choices from those of the corresponding
deterministic auction gives at most a small improvement for that
bidder.

The paper is organized as follows. \sect{auction} describes the
quantum auction and the bidding language encoding bids in quantum
states. \sect{search} describes the quantum search method to find
the maximum bid. After these sections describing the auction
protocol, in \sect{StatBehavior} we turn to strategic issues raised
by the quantum nature of the auction beyond those in the
corresponding classical auctions. Then, in \sect{AuctionDesign} we
give a game theory analysis of some of these strategic possibilities
and describe how simple modifications of the quantum search improves
the auction outcome, in theory. \sect{Multiple} generalizes the
results to auctions of multiple items, including combinatorial
auctions. \sect{applications} describes scenarios for which the
quantum protocol offers likely economic advantages in terms of
information security and ability to compactly express complex
dependencies among items and bidders. Finally, \sect{discussion}
summarizes the quantum auction protocol and highlights a number of
remaining economic questions.

\section{Quantum Auction Protocol}\sectlabel{auction}

In our auction protocol, each bidder selects an operator that
produces the desired bid from a prespecified initial state. The
auctioneer repeatedly asks the bidders to apply their individual
operators in a distributed implementation of a quantum search to
find the winning bid. More specifically, the quantum auction
protocol for sealed-bid auctions involves the following steps:
\begin{enumerate}

\item Auctioneer announces conventional aspects of the auction: type
of auction (e.g., first or second price and any reservation prices),
the good(s) for sale, the allowed price granularity (e.g., if bids
can specify values to the penny, or only to the dollar), and the
criterion used to determine the winner(s), e.g., maximizing revenue
for the seller

\item  Auctioneer announces how quantum
states will be interpreted, i.e., as specifying a price if only one
good is for sale, or a combination of price and a set of goods if
combinations are for sale; and also announces the initial quantum
state. This state uses $\nBits$ qubits for each bidder. Auctioneer
announces the quantum search procedure.

\item Each bidder selects an operator on $\nBits$ qubits. Bidders keep their choice of operator private.

\item Auctioneer produces a set of particles implementing $\nBits$ qubits for each bidder, initializing the set to
the announced initial state.

\item Auctioneer and bidders perform a distributed search for the
winner

\end{enumerate}

\fig{schematic} illustrates this procedure for two bidders and
repeating the steps of the search twice. Realistic search involves a
larger number of steps. In contrast with other quantum games, e.g.,
public goods, that involve just one round of interaction, the search
required to identify the winners involves multiple rounds of
interaction among the participants. The required number of
iterations depends on the search method. In practice, the auctioneer
could pick the number of iterations based on prior experience with
similar auctions, or from simulating several test cases using
valuations randomly drawn from a plausible distribution of values
for the auction items. Alternatively, the auctioneer could repeat
the procedure several times (possibly with steps from each
repetition interleaved in a random order) and use the best result
from these repetitions.

\begin{figure}
\begin{center}
\includegraphics[scale=0.6]{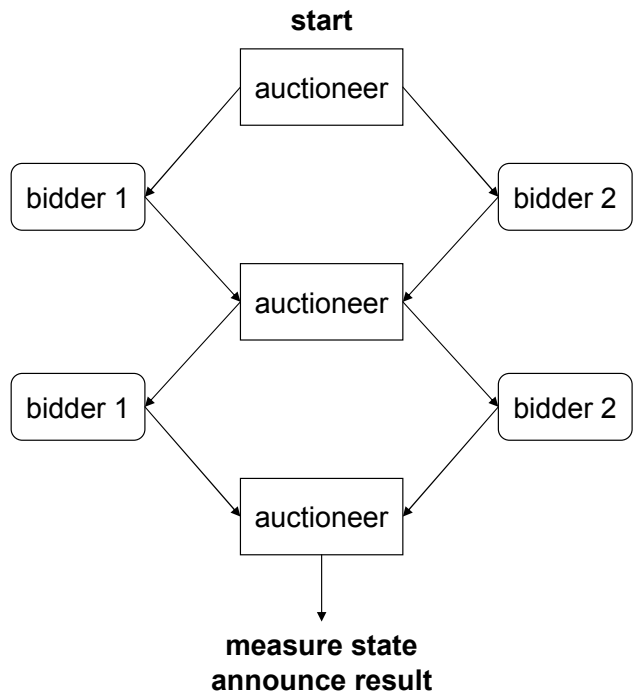}
\end{center}
\caption{Schematic diagram of distributed search procedure,
showing repeated interactions between auctioneer and bidders, in
this case two bidders and two steps of the distributed search.
}\figlabel{schematic}
\end{figure}

This auction protocol uses a distributed search so bidders'
operator choices remain private. Specifically, the search
operation requiring input from the bidders is applied locally by
each bidder, giving the overall operator
\begin{equation}\eqlabel{U}
U = U_1 \otimes U_2 \otimes \ldots \otimes U_\nBidders
\end{equation}
where $\nBidders$ is the number of bidders and $U_i$ the operator
of bidder $i$.

\begin{table}
\begin{center}
\begin{tabular}{l|l}
number of bidders   & \nBidders  \\
\hline number of items in auction  & \nItems  \\
\hline number of qubits per bidder & \nBits \\
%\hline number of qubits per bidder for representing the item or bundle & \nBitsi \\
%\hline number of qubits per bidder for representing the corresponding price& \nBitsp \\
\hline state of qubits for bidder $j$ & $\psi_j$ \\
\hline state of all qubits & $\Psi = \psi_1 \otimes \ldots \otimes \psi_\nBidders$ \\
\end{tabular}
\end{center}
\caption{Notation for the quantum auction.}
\end{table}

\section{Quantum Auction Implementation}\sectlabel{search}

A quantum auction requires finding the winning bid and corresponding
bidder. This procedure has two components: the interpretation of the
qubits as bids, and the search procedure to find the winner. The
following two subsections discuss these components in the context of
a single-item auction. \sect{Multiple} generalizes this discussion
to multiple items.

\subsection{Creation and interpretation of quantum bids}\sectlabel{bids}

We define a {\em bid} as the amount a bidder indicates he is
willing to pay for the item. An {\em allocation} is a list of
bids, one from each bidder. The quantum auction protocol
manipulates superpositions of allocations. We use an {\em
allocation rule} to indicate how allocations specify a winner and
amount paid.

\begin{eg}
Consider an auction of one item with three bidders, willing to pay
\$1, \$3 and \$10 for the item, respectively. We represent these
bids as $\ket{\$1}$, $\ket{\$3}$ and $\ket{\$10}$, and the
corresponding allocation as the product of these states, i.e.,
$\ket{\$1,\$3,\$10}$ with the ordering in the allocation
understood to correspond to the bidders. A simple allocation rule
selects the highest bidder as the winner, who pays the high bid.
In this example, this rule results in the third bidder winning,
and paying \$10 for the item.
\end{eg}

Each bidder gets \nBits\ qubits and can only operate on those bits.
Thus each bidder has $2^\nBits$ possible bid values, and can create
superpositions of these values. A superposition of bids specifies
set of distinct bids, with at most one allowed to win. The
amplitudes of the superposition affect the likelihood of various
outcomes for the auction. For a single-item auction, a bidder will
typically have only one bid. As discussed below, more complicated
superpositions are useful for information hiding. Specifically,
bidder $j$ selects an operator $U_{j}$ on \nBits\ qubits to apply to
the initial state for that bidder's qubits \psiInit\ specified by
the auctioneer. The resulting state, $\psi_{j} = U_{j} \psiInit$, is
a superposition of bids, each of the form $\ket{b_{i}^{(j)}}$ where
$b_{i}^{(j)}$ is bidder $j$'s bid for the item. The subscript $i$
indicates one of the possible bids that can be specified with
\nBits\ qubits according to the announced interpretation of the
bits.

We define the subspace used by bidder $j$ as the set of states
spanned by the basis eigenvectors in $\psi_{j}$. Only these basis
vectors appear in allocations relevant for the search. As bidders
apply their operators during the search, the superposition of
allocations remains within the subspace of each bidder. In this
case, where each bidder applies an operator only to their own
qubits, the superposition of allocations is always a factored form,
i.e., $\Psi = \psi_{1} \otimes \ldots \otimes \psi_{n}$. More
generally, groups of bidders could operate jointly on their qubits,
entangling their bids in the allocations as discussed in
\sect{applications}.

To exploit information hiding properties of superpositions, the
state revealed at the end of the search should specify only the
bidder who wins the item and the corresponding bid. To achieve this,
instead of a direct representation of bids, we interpret bids formed
from the \nBits\ qubits available to a bidder as containing a
special null value, \zero, indicating a bid for nothing. This null
bid has additional benefits in multiple item settings, as discussed
in \sect{Multiple} and \sect{applications}.

\begin{eg}
Consider bidder $j$ with two qubits and the initial state
$\psiInit = \ket{00}$ corresponding to the vector $(1, 0, 0, 0)$,
which is interpreted as the null bid. The other bid states are
\ket{01}, \ket{10} and \ket{11} corresponding to vectors
(0,1,0,0), (0,0,1,0) and (0,0,0,1). These three states are
interpreted as three bid values in some preannounced way, e.g.,
\$1, \$2 and \$3, respectively.

The operator
\begin{equation}
U_j = \frac{1}{\sqrt{2}}
\begin{pmatrix}
    1 & 0 & 1 & 0 \cr
    0 & 1 & 0 & 1 \cr
    1 & 0 & -1 & 0 \cr
    0 & 1 & 0 & -1 \cr
\end{pmatrix}
\end{equation}
gives the initial state $\psi_{j} = U_{j} \psiInit$ as
$(\ket{00}+\ket{10})/\sqrt{2}$ and specifies the search subspace
whose basis is the first and third columns of $U_j$ in this example.
Thus the possible allocations involve only $\ket{00}$ and $\ket{10}$
for this bidder, corresponding to the null bid and a bid of \$2,
respectively.
\end{eg}

In the presence of a null bid, we consider an allocation to be a
{\em feasible} if it contains exactly one bid not equal to \zero.
The corresponding allocation rule assigns no winner to infeasible
allocations and, for feasible allocations, the winner is the single
bidder in the allocation whose bid is not \zero, and he pays the
amount bid. This allocation rule corresponds to a first-price
single-item auction, except there can be no winner, analogous to the
situation in auctions with a reservation price when no bidder
exceeds that price.

\subsection{Distributed Search}\sectlabel{adiabatic}

The auctioneer must find the best state according to an announced
criterion, e.g., maximum revenue. Specifically, the auctioneer has a
evaluation function $F$ assigning a quality value to each
allocation. The function $F$ assigns a lower value to infeasible
allocations than to any feasible one. An example is $F$ equal to the
revenue produced by the allocation (if feasible) and otherwise is
$-1$.

The auctioneer uses quantum search to find the allocation in the
subspace selected by the bidders giving the maximum value for $F$
(e.g., a feasible allocation giving the most revenue to the
auctioneer). This could be done via repeated uses of a
decision-problem quantum search~\cite{grover96,boyer96} as a
subroutine within a search for the minimum threshold value of $F$
giving a solution to the decision problem, e.g., with a classical
binary search on threshold values or using results of prior
iterations of the decision problem~\cite{durr96}. Alternatively, we
could use a method giving the maximum value directly (e.g.,
adiabatic~\cite{farhi01} if run for a sufficiently long time or
heuristic methods~\cite{hogg00,hogg03} based on some prior knowledge
of the distribution of bidders values). For definiteness, we focus
on the adiabatic method.

The adiabatic search is conventionally described as searching for
the minimum {\em cost} state. We use this convention by defining a
state's cost to be the negative of the evaluation function $F$. The
adiabatic search procedure, if run sufficiently slowly, changes the
initial superposition into a final superposition in such a way that
the amplitude in each initial eigenstate maps to the same amplitude
in the corresponding final eigenstate, up to a phase factor (for
nondegenerate eigenstates). We refer to this mapping of initial to
final eigenstates as a {\em perfect search}. In practice, with a
finite time for the search, there will be some transfer of amplitude
among the eigenstates so the search will not be perfect in the sense
defined here. Instead the auction outcome is probabilistic: the
auction will not always produce the best outcome when starting from
the ground state. For example, an auction intending to find the
highest bid could sometimes produce the second highest bid instead.
Conventionally, the search operations are chosen so the uniform
superposition is the lowest cost initial eigenstate. In our case,
bidders are free to choose their operators and need not create
uniform superpositions.

A discrete implementation of adiabatic search consists of the
following steps:
\begin{itemize}
\item The auctioneer selects a number of search steps $S$ and
parameter $\Delta$. These need not be announced to the bidders.

\item The auctioneer initializes
the state of all $\nBidders \nBits$ qubits to $\PsiInit = \psiInit
\otimes \ldots \otimes \psiInit = \ket{0,\ldots,0}$, with
$\nBidders$ factors of $\psiInit$ in the product, and
$\psiInit=\ket{0}$ is the initial state for the $\nBits$ qubits
for a single bidder.

\item The auctioneer sends these initialized qubits to the bidders
who use their individual operators and then return the qubits to
the auctioneer, jointly creating the state
\begin{equation}\eqlabel{search init}
\Psi_0 = U \PsiInit
\end{equation}

\item For $s=1,\ldots,S$, the auctioneer and bidders update the state to
\begin{equation}\eqlabel{search update}
\Psi_s = U D(f) U^\dagger P(f) \Psi_{s-1}
\end{equation}
with $f = s/S$ the fraction of steps completed. The bid operator
$U$ and its adjoint $U^\dagger$ are performed by sending bits to
the bidders as described in \sect{auction}. The diagonal matrices
$D(f)$ and $P(f)$ are described below.

\item The auctioneer measures the state $\Psi_S$, resulting in specific values for all the bits, from which the winner and prices are determined
by the allocation rule described in \sect{bids}.

\end{itemize}

The diagonal matrix $P(f)$ adjusts the phases of the amplitudes
according to the cost associated with each allocation. In
particular, using the cost $c(x) = -F(x)$ for allocation
$\ket{x}$, we have
\begin{equation}\eqlabel{search Hc}
P_{x x}(f) = \exp \left( -i f c(x) \Delta \right)
\end{equation}
Similarly, the diagonal matrix $D(f)$ adjusts amplitude phases as
defined by a function $d(x)$:
\begin{equation}\eqlabel{search H0}
D_{x x}(f) = \exp \left( -i (1-f) d(x) \Delta \right)
\end{equation}
The key property of $d(x)$ is assigning the smallest value, e.g., 0,
to $\ket{0}$, thereby making the first column of $U$ the ground
state eigenvector. Aside from this key property, the choice of
$d(x)$ is somewhat arbitrary. The conventional choice in the
adiabatic method uses the Hamming weight of the state, i.e., $d(x)$
equal to the number of 1 bits in the binary representation of $x$.
However, as described in \sect{AuctionDesign}, other choices for
$d(x)$ can improve the incentive properties of the auction.

The discrete-step implementation of the continuous adiabatic
method~\cite{farhi01} involves the limits $\Delta \rightarrow 0$
and $S \Delta \rightarrow \infty$, in which case the final state
$\psi_S$ has high probability to be the lowest cost state. In
practice, this outcome can often be achieved with considerably
fewer steps using a fixed value of $\Delta$, corresponding to a
discrete version of the adiabatic method~\cite{hogg03}.

\section{Strategies with Quantum Operators}\sectlabel{StatBehavior}

Ideally, an auction achieves the economic objective of its design
(e.g. maximum revenue for the seller). In practice, an auction
design may not provide incentives for participants to behave so as
to achieve this objective. Usually auction designs are examined
under the assumption of self-interested rational participants. In
conventional auctions, strategic issues include misrepresentation
of the true value, collusion among bidders and false name bidding
(where a single bidder submits bids under several aliases). Some
of these issues can be addressed with suitable auction rules,
e.g., second price auctions encourage truthful reporting of
values. Developing suitable designs of classical auctions in a
wide range of economic contexts remains a challenging
problem~\cite{wilson92}.

Quantum auctions raise strategic issues beyond those of classical
auctions. In our case, every step of the adiabatic search requires
each bidder to perform an operation on their qubits. Ideally, the
bidder should use the same operator $U$ for creating $\psiInit$ as
in every step of the search in \eq{search update}. In addition,
bidders should include the null bid in their subspaces. In the
classical first-price sealed-bid auction, the bidder makes one
choice: the amount to bid. In our quantum setting, this choice
amounts to selecting the subspace to use with the quantum search.
The remaining freedom to select $U$, and possibly a different $U$
for each step in the search, are additional choices provided by
the quantum auction.

Bidders may be tempted to exploit the flexibility of choosing
operators in several general ways. First, they could use a
subspace not including the null bid. Second, they could use a
different operator for creating $\psiInit$ than they use in the
rest of the search, thereby producing an altered initial amplitude
that is not the ground state eigenvector. Third they could change
operators during the search. If any such changes give significant
probability for low bids to win, bidders would be tempted to make
such changes and include a low bid in their subspace, hoping to
profit significantly by winning the auction with a low bid.

The remainder of this section describes some strategic issues
unique to quantum auctions and possible solutions. We further
discuss a game theory analysis of some of these issues in
\sect{AuctionDesign}.

\subsection{Selecting the Subspace}\sectlabel{subspace choice}

The use of the null bid in our protocol raises the strategic issue
illustrated in the following example:

\begin{eg} Consider an auction of a single item with two bidders
Alice and Bob. Using operators producing uniform amplitudes for
the sake of illustration, they ought to apply operators that
create
\begin{center}
$\frac{1}{\sqrt{2}}(\ket{\zero} + \ket{b_{A}})$ and
$\frac{1}{\sqrt{2}}(\ket{\zero} + \ket{b_{B}})$
\end{center}
respectively, where $b_A$ and $b_B$ are their desired bids. The
initial superposition for all the qubits is the product of these
individual superpositions, i.e., $\Psi_0$ is
\begin{displaymath}
\frac{1}{2}(\ket{\zero,\zero} + \ket{b_{A},\zero}+
\ket{\zero,b_{B}} +\ket{b_{A},b_{B}} )
\end{displaymath}
If bidders use these same operators during the search, the search
algorithm finds the highest revenue allocation, i.e., giving the
item to the highest bidder. Suppose instead Bob picks an operator
with a one-dimensional subspace, producing an initial state
$\ket{b_{B}}$ rather than including $\zero$. The product
superposition is then
\begin{displaymath}
\frac{1}{\sqrt{2}}(\ket{\zero,b_{B}} +\ket{b_{A},b_{B}} )
\end{displaymath}
Since the search remains in this subspace and the second
allocation is infeasible, the search will return
$\ket{\zero,b_{B}}$ no matter what Alice bids. Thus Bob always
wins the item, and can win using the lowest possible bid.
\end{eg}

This example shows bidders have an incentive to exclude the null set
from their subspace. If all bidders make this choice, there will be
no feasible allocations in the joint subspace and the auction will
always give no winner. For auctions with more than two bidders,
selecting subspaces excluding $\zero$ is a weak Nash Equilibrium for
the quantum auction because any other choice by a single bidder
still results in no feasible allocations.

\subsection{Altering Initial Amplitudes}\sectlabel{initial
amplitudes}

Strategic choices for bidders also arise from the search procedure
itself, even when using the correct subspace consisting of $\zero$
and the desired bid. In particular, the probabilistic outcome of
the search means the optimal bid according to the auction
criterion (e.g., highest revenue) will not always win. For the
adiabatic search method, bidders could try to arrange for
especially tiny eigenvalue gaps between the state corresponding to
the best outcome and another state allowing them to win with a low
bid. A sufficiently small gap could make the number of steps the
auctioneer selects insufficient to give the optimal state with
high probability and instead give a significant chance of
producing the more favorable outcome. However, because the
eigenvalues are a complicated function of the operators of all
bidders, and individual bidders do not know the choices made by
others, it will be difficult for a bidder to determine how to make
such especially small gaps and do so in a way that gives a
favorable outcome. Nevertheless, even fairly small probabilities
for not finding the optimal state could alter the strategic
behavior of the bidders.

\begin{figure}[t]
\begin{center}
\includegraphics[scale=0.8]{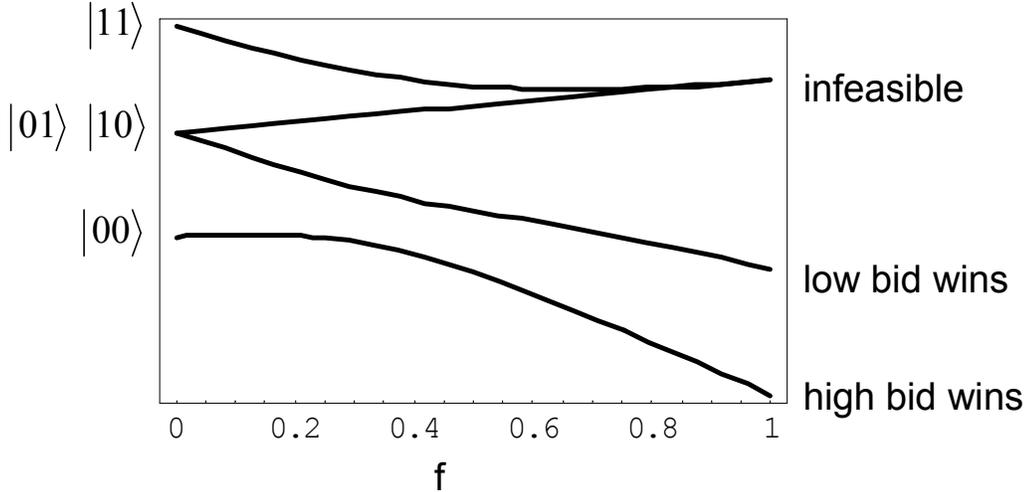} % from schematic.ppt using plot from permuted-eigenstates.nb
\caption{Correspondence between initial basis and the possible
allocations for a single item auction with two bidders in the
standard adiabatic search. During the search, as $f$ increases from
0 to 1, the eigenvalues of the four states change as shown
schematically in the figure. The states for $f=0$ correspond to both
bidders starting with the ground state, $\ket{00}$, the two states
obtained if one of the bidders starts with a different
superposition, $\ket{01}$ and $\ket{10}$ (``single-bidder deviation
states''), and the state of both bidders starting with different
superpositions, $\ket{11}$ (``2-bidder deviation
state'').}\figlabel{standard}
\end{center}
\end{figure}

A more direct way a bidder can arrange for a low bid to win is by
altering the initial state of the adiabatic search to start not in
the ground state but in an eigenvector corresponding to one of the
first few eigenvalues above the ground state. The adiabatic search
takes such eigenvectors, with high probability, to an outcome in
which a bid lower than the highest wins. While a single bidder
cannot create an arbitrary initial condition, one bidder can
ensure that it is not the ground state. For example, a bidder
could chose an operator that gives a nonuniform amplitude for the
initial state, in particular $(\ket{\zero} -
\ket{b_{A}})/\sqrt{2}$, while using the uniform state
$(\ket{\zero} + \ket{b_{A}})/\sqrt{2}$ as the ground state through
the remainder of the search in \eq{search update}. This can result
in significant probability for a low bid to win, and so a bidder
is tempted to deviate from the nominal operator choice.

\fig{standard} illustrates this behavior. Instead of starting in
the ground state, the bidder's choice gives the initial state as a
linear combination of the ground state and the single-deviation
state for that bidder, denoted as $\ket{01}$ or $\ket{10}$ for the
two bidders in \fig{standard}. Here a ``single deviation'' state
is one that a single bidder can create, i.e., by operating on just
the qubits available to that bidder. The adiabatic search splits
the degeneracy, thereby giving some probability for the lowest bid
to win and some probability for an infeasible allocation.

More generally, bidder $i$ uses this strategy by selecting two
different operators $U_i^{\rm init}$ and $U_i$ to use for forming
the initial state and during the search, respectively. These
choices result in different joint operators, in \eq{U}, used in
\eq{search init} and \eqbare{search update}.

As with selecting a subspace without $\zero$, if many or all
bidders make this choice, the initial state will have significant
amplitude in eigenvectors corresponding to large eigenvalues,
which produce infeasible outcomes and hence a high probability for
no winner. Thus with standard adiabatic search, if everyone uses
the same operator for both initialization and search, then each
bidder is tempted to use a different initialization operator and
bid low, gaining a chance to win with a low bid. However, if
multiple bidders attempt this, the outcome will most likely be an
infeasible state, with no winner.

We can address this problem by reordering the eigenvalues given by
the $d(x)$ function in \eq{search H0} so that any change in
initial operator by a single bidder increases probability of
infeasible allocation but not the probability of any feasible
allocation with a bid lower than the highest bid. This is possible
because bidders only have access to their own bits, so can only
form initial superpositions from a limited set of basis vectors.
\fig{permuted} illustrates the resulting situation. We give an
analysis of this approach in \sect{behavior}.

\begin{figure}[t]
\begin{center}
\includegraphics[scale=0.8]{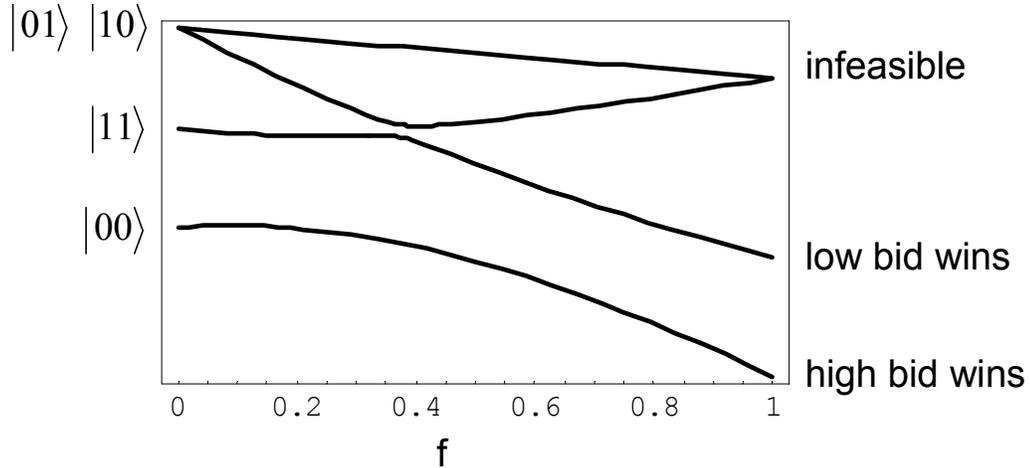} % from schematic.ppt using plot from permuted-eigenstates.nb
\caption{Correspondence between the initial basis and the possible
allocations for a single item auction with two bidders in the
search with permuted initial eigenvalues.}\figlabel{permuted}
\end{center}
\end{figure}

\subsection{Changing Operator During Search}\sectlabel{change
during search}

The distributed search of \eq{search update} has each bidder using
the same operator for every step of the search. Thus bidders may
gain some advantage by altering their operator during the steps of
the search. Gradually changing the operator during the search
amounts to a different path from initial to final Hamiltonian
during the adiabatic search. Thus, provided the auctioneer uses
enough steps, such changes will have at most a minor effect on the
outcome probabilities unless the bidder can arrange for
particularly small eigenvalue gaps among favorable states. Such
arrangement is difficult, particularly since the bidder does not
know the choices of other bidders and the auctioneer could treat
the bits from the bidders in an arbitrary, unannounced order.

More significant changes in outcome is possible with sudden, large
changes in the operator during search. Since the use of bidders
operators gradually decreases during the search (i.e., $D_{x
x}(f)$ given in \eq{search H0} approaches the identity operator as
$f$ approaches 1), the most problematic situation is for an abrupt
change in operator at the beginning of the search. After such a
change, the adiabatic search continues its gradual change of
states, but now instead of starting in the ground state, it will
instead have a linear combination of various states obtained by
mapping the original basis onto the basis after the change.

\section{Quantum Auction Design}\sectlabel{AuctionDesign}

In this section, we focus on mechanism design to reduce incentive
issues arising from the quantum aspects of the auction. We analyze
incentive issues with the Nash equilibrium (NE) concept commonly
used to evaluate auctions~\cite{wilson92}. A given set of behaviors
for the bidders is an equilibrium if no single bidder can gain an
advantage (i.e., higher expected payoff) by switching to another
behavior. Specifically, \sect{null set} describes an approach to
encouraging bidders to include the null set in their bids. In
\sect{behavior} we show that using the ground state eigenvector is a
NE provided bidders do not change the operators during the search.
\sect{test change during search} then discusses how the auctioneer
can discourage bidders from changing operators. \sect{auctioneer}
describes how the auction can be made symmetric across the different
bidders. We focus on single-item auctions in this section, but the
ideas extend to quantum combinatorial auctions, as described in
\sect{Multiple}.

\subsection{Checking for the Null Set}\sectlabel{null set}

One approach to the incentive to exclude the null set, described in
\sect{subspace choice}, is for the auctioneer to perform a second
search: for the allocation with the most \zero\ values. This search
uses the same distributed protocol of \eq{search update} but with
separate qubits and a different cost function to define $P(f)$,
i.e., setting $c(x)$ to the number of non-\zero\ values in the
allocation $x$. Interleaving the additional search in a random order
within the steps of the search for the winning bid prevents bidders
from knowing which search a given step belongs to. So bidders could
not consistently select different operators for the two searches.

If all bidders include $\zero$ in their selected subspace, this
additional search returns $\ket{\zero,\zero,\ldots}$. Any bidder
found not to have included $\zero$ could be excluded from winning
the auction. At this point the auctioneer could either announce
there is no winner, or restart the auction for the remaining bidders
without announcing this restart. The adiabatic search has a small
but nonzero probability of returning the wrong result, which would
then incorrectly conclude some bidder did not include \zero. As long
as the probability of such errors is smaller than the error
probability of the search for the winner, these errors should not
greatly affect the incentive structure of the mechanism.
Alternatively, the auctioneer could use a search completing with
probability one in a finite number of steps, i.e., with different
choices of $D$ and $P$ in \eq{search update}, the auctioneer could
implement Grover's algorithm~\cite{grover96} to search for the
allocation $\ket{\zero,\zero,\ldots}$ in the joint subspace of the
bidders. Since the auctioneer does not know the size of the
subspaces selected by the bidders, the auctioneer would need to try
various numbers of steps~\cite{boyer96} before concluding
$\ket{\zero,\zero,\ldots}$ is not in the selected subspaces. Unlike
the adiabatic search, failure would only indicate \emph{some} bidder
had not included \zero, but not which one. Thus the auctioneer's
only alternative in this case is to announce the auction has no
winner.

While this approach removes the immediate benefit of not including
the null bid, its affect on broader strategic issues in the full
auction is an open question.

\subsection{The First-Price Sealed-Bid
Auction}\sectlabel{behavior}

In this section we examine the incentive structure of the auction
with permuted eigenstates described in \sect{initial amplitudes}. We
first review how game theory applies to auctions. We then consider
the quantum auction when the search runs long enough to give
successful completion almost always (``perfect search''). Finally,
we consider the more realistic case of search with small, but not
negligible, probability for non-optimal outcomes.

\subsubsection{A Game Theory Approach to Auctions}\sectlabel{FPA}

Game theory is a common approach to evaluating
auctions~\cite{milgrom82,wilson92}. Consider $n$ people bidding for
an item, with person $i$ having value $v_i$ for the item. Unlike
discrete choice games, such as the prisoner's dilemma, a strategy
for a private value auction involves a bidding \emph{function}
$b(v)$, mapping a bidder's value to a corresponding bid. Theoretical
analysis of auctions usually involves identifying a NE strategy, if
any. This is a strategy for all players such that no bidder gains by
changing this strategy given everyone else is using it. This focus
on possible changes by a single bidder assumes bidders do not
collude.

A primary issue for auction behavior is how much participants know
about other bidders' values. Such knowledge can affect the choice of
bid. The most popular model of such knowledge is independent private
values, where the $v_i$ are independently drawn from the same
distribution. Each bidder knows his own value, but not the values of
other bidders. However, the distribution from which values come is
common knowledge, i.e., known to all bidders, each bidder knows the
others know this fact, and so on. A final ingredient for the
analysis is an assumption of bidders' goals. For illustration, we
use the common assumption that bidders are risk neutral expected
utility maximizers, and within the context of the auction, utility
is proportional to profit.

We illustrate this approach for a first-price sealed-bid auction, in
which each bidder submits a single bid without seeing any of the
other bids. This corresponds to the auction scenario considered in
this paper. The bidder with the highest bid gets the item and pays
the amount of his bid. Thus if bidder $i$ bids $b_i$, his profit is
$v_i-b_i$ if he wins the auction and zero otherwise. To avoid
possibly losing money, bidders should ensure $b_i \le v_i$, and bids
are required to be nonnegative.

In the symmetric case where bidders' values all come from the same
distribution, a NE is a bidding function $b(v)$. A bidder's expected
payoff is $(v-b(v))P(b)$ where $v$ is his value, $b$ is his bidding
function and $P(b)$ is the probability of winning if he is using
$b(v)$ (which is also the function others use in equilibrium). Let
$F$ be the cumulative distribution of values, i.e., probability a
value is at most $v$, and $n$ be the number of bidders. The
equilibrium condition leads to a differential equation satisfied by
$b(v)$~\cite{wilson92}. As a simple example, when $v$ is uniformly
distributed between 0 and 1, $F(v) = v$ and the NE is $b(v) = (n-1)
v / n$. Thus, in the equilibrium strategy, a bidder bids somewhat
less than his value and the bid gets closer to the value when there
is more competition, i.e., larger $n$.

If bidders have differing value distributions, a NE involves a set
of bidding functions, $\{b_i(v)\}$. An auction may have multiple
equilibria.

\subsubsection{Behavior with Perfect Search}

With perfect search and non-colluding bidders, if bidders use the
same operators for every step of the search, including
initialization, and pick a subspace with the null bid then the
adiabatic search described in \sect{adiabatic} finds the highest
revenue state. We now show that the auctioneer can choose
eigenvalues for the search so that  bidders have no incentive to
create an initial state different from the ground state. This
choice corresponds to the auctioneer selecting an appropriate
function $d(x)$ in \eq{search H0}.

Suppose bidder $i$ uses operator $U_i$, giving the overall
operator $U$ with \eq{U}. Suppose all bidders except bidder 1 use
the same operator to create the initial state as they use for the
subsequent search. But bidder 1 uses two operators: $U_1^{\rm
init}$ to form the initial state and $U_1$ for the search. Thus
the initial state produced by bidder 1, $\psi_{1} = U_{1}^{\rm
init} \psiInit$, i.e., the first column of $U_{1}^{\rm init}$, is
not necessarily equal to the first column of $U_{1}$ that bidder 1
uses for the subsequent search. Instead, $\psi_1$ may have
contributions from all columns of $U_{1}$, i.e.,
\begin{equation}
\psi_1 = \sum_{i=0}^{2^\nBits-1} \alpha_i \ket{i}
\end{equation}
where $\ket{i}$ corresponds to column $i$, ranging from $0$ to
$2^\nBits-1$, of $U_1$. Combining with the initial state of all
other bidders, \eq{search init} gives $\Psi_0 = \sum_i \alpha_i
\ket{i,0,\ldots,0}$, instead of the initial ground state
$\ket{0,0,\ldots,0}$.

Significantly, because a bidder can only operate on the $\nBits$
qubits from the auctioneer and not on any of the qubits sent to
other bidders, a single bidder can only create a limited set of
``single-deviation'' initial states. In the case of bidder 1,
these states all have the form $\ket{i,0,\ldots,0}$. Similarly, if
bidder $j$ is the one using different initial and search
operators, the states all have the form
$\ket{\ldots,0,i,0,\ldots}$, where only the $j^{th}$ position can
be nonzero. Thus, among the $2^{\nBidders \nBits}$ basis states in
the full search space, aside from the correct ground state, only
$\nBidders (2^\nBits - 1)$ are possible states some single bidder
can create when all other bidders use the same operator for
initialization and search.

More generally, $k$ bidders can create superpositions of
$(2^\nBits-1)^k$ basis states in which none of them use the ground
state initially, by selecting different operators for initialization
and search. Thus there are
\begin{equation}\eqlabel{k deviation states}
\binom{\nBidders}{k} (2^\nBits-1)^k % using \binom (from amsmath package) instead of {\nBidders \choose k}
\end{equation}
$k$-deviation states that some set of $k$ bidders can create, while
the other $n-k$ bidders use the ground state.

Our formulation has $\nBidders(2^\nBits-1)$ feasible allocations,
i.e., situations in which exactly one of the bidders has a
non-\zero\ bid while all other bidders have \zero. To see this, each
of the $\nBidders$ bidders could have the non-\zero\ bid, and this
bid could have any of $2^\nBits-1$ values (since the remaining value
for the bidder's bits represents \zero). The remaining $\nBidders-1$
bidders have only one choice each, i.e., \zero.

Suppose the auctioneer selects $d(x)$ such that
$d(\ket{0,\ldots,0})=0$ is the lowest eigenvalue and $d(x)$ for all
single-deviation states $x$ is the largest value, with intermediate
values for all other states. Provided the number of infeasible
allocations is at least equal to the number of single-deviation
states, a perfect search will then map every single-deviation state
to an infeasible allocation, resulting in no winner for the auction.
This condition amounts to
\begin{equation}\eqlabel{condition}
2^{\nBidders \nBits} - \nBidders(2^\nBits-1) \geq \nBidders
(2^\nBits - 1)
\end{equation}
The following claim shows that \eq{condition} always is true in an
auction scenario.
\begin{clm}
\eq{condition} is true for all integers $\nBidders, \nBits \geq 1$
\end{clm}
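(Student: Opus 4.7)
The plan is to recast Eq.~\eqbare{condition} as the equivalent inequality $2^{np} \geq 2n(2^p - 1)$ and to prove it for all integers $n, p \geq 1$ via a binomial expansion, supplemented by a direct check on the parameter boundary.

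First I would apply the binomial theorem to write
$2^{np} = (1 + (2^p - 1))^n = \sum_{k=0}^{n} \binom{n}{k}(2^p - 1)^k.$
This identity is the very combinatorial decomposition underlying \eq{k deviation states}: the $2^{np}$ basis states split into the $\binom{n}{k}(2^p - 1)^k$ states in which exactly $k$ bidders deviate from the ground-state component. Peeling off the $k = 0$ contribution (which is $1$) and the $k = 1$ contribution (which is $n(2^p - 1)$) recasts the target as $1 + \sum_{k=2}^{n} \binom{n}{k}(2^p - 1)^k \geq n(2^p - 1)$.

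Second, since every remaining summand on the left is nonnegative, it suffices to bound the left-hand side from below using just the $k = 2$ contribution $\binom{n}{2}(2^p - 1)^2 = \tfrac{1}{2} n(n-1)(2^p - 1)^2$ together with the leading constant $1$. After dividing by $n(2^p - 1)$ (when positive), the remaining requirement reduces to the elementary estimate $\tfrac{1}{2}(n-1)(2^p - 1) + \tfrac{1}{n(2^p - 1)} \geq 1$, which is straightforward to verify and is tight precisely at $(n, p) = (2, 1)$.

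Third, I would handle the $p = 1$ slice uniformly via the classical bound $2^n \geq 2n$, proved by a two-line induction on $n$ with equality at $n = 1, 2$; Eq.~\eqbare{condition} restricted to $p = 1$ is exactly this bound. The main obstacle is orchestrating the case analysis cleanly: the binomial lower bound and the right-hand side are simultaneously tight at the smallest configurations, so a careful partition of $\{(n,p) : n, p \geq 1\}$ into the $p = 1$ strip and the complementary region, with any remaining singleton corners confirmed by direct substitution, is what makes the bookkeeping clean and completes the proof.
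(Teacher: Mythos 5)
There is a genuine gap, and it sits in the step you label ``straightforward to verify.'' Your reduction to $\tfrac12(\nBidders-1)(2^\nBits-1)+\tfrac{1}{\nBidders(2^\nBits-1)}\geq 1$ is correct, but this estimate is \emph{false} on the entire ray $\nBidders=1$, $\nBits\geq 2$, where it reads $1/(2^\nBits-1)\geq 1$. This is not a ``singleton corner'' that direct substitution will confirm --- substitution refutes it --- and no repair is possible because \eq{condition} itself fails there: at $\nBidders=1$, $\nBits=2$ it asserts $2^2-3\geq 3$, i.e.\ $1\geq 3$. So the claim as stated is wrong for one bidder with more than one qubit. You are in good company: the paper's own proof has exactly the same blind spot, since for $\nBidders=1$ the sum $\sum_{k=1}^{\nBidders-1}$ in \eq{condition with collusion} is empty, the ``$k=1$ term'' invoked to recover \eq{condition} is absent, and the stronger condition degenerates to the vacuous $1\geq 0$. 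The honest fix is to restrict the claim to $\nBidders\geq 2$ (the one-bidder auction is degenerate anyway, and Claim~\ref{multiClaim} already concedes the analogous failure at $\nBidders=1$); with that restriction your argument closes.

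Setting aside the $\nBidders=1$ issue, your route is genuinely different from the paper's and arguably cleaner. The paper splits off $\nBits=1$, and for $\nBits\geq 2$ proves the stronger collusion inequality by reducing it to $x^{\nBidders}-\nBidders x+1\geq 0$ with $x=2^\nBits-1\geq 3$ and verifying monotonicity of this function in both arguments via derivatives. Your binomial expansion $2^{\nBidders\nBits}=\sum_{k=0}^{\nBidders}\binom{\nBidders}{k}(2^\nBits-1)^k$ mirrors the combinatorial meaning of \eq{k deviation states}, and retaining only the $k=0$ and $k=2$ terms handles every $\nBidders\geq 2$, $\nBits\geq 1$ in one stroke: the quantity $(\nBidders-1)(2^\nBits-1)$ is at least $2$ except at $(\nBidders,\nBits)=(2,1)$, where your bound holds with equality --- matching the equality case $2^2=2\cdot 2\cdot 1$ of \eq{condition}. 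This needs no calculus and no case split on $\nBits$, though it only establishes \eq{condition} itself, whereas the paper's longer argument also delivers the collusion-resistance bound \eq{condition with collusion} that is reused later.
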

\begin{proof}
When $\nBits = 1$, \eq{condition} reduces to $2^{\nBidders-1} \geq
\nBidders$, which is true for all $\nBidders \geq 1$.

We prove a stronger condition for $\nBits \geq 2$, namely there are
enough infeasible states to handle up to $\nBidders-1$ bidders
deviating. Using \eq{k deviation states}, this stronger condition is
\begin{equation}\eqlabel{condition with collusion}
2^{\nBidders \nBits} - \nBidders(2^\nBits-1) \geq \sum_{k=1}^{n-1}
\binom{\nBidders}{k} (2^\nBits - 1) =2^{\nBidders \nBits} - 1 -
(2^\nBits - 1)^\nBidders
\end{equation}
%%%%%%%NEED the AMSMATH package to use binom%%%%%%%%%%%
with the $k=1$ term in the sum corresponding to the right-hand side
of \eq{condition}. Writing $x \equiv 2^\nBits -1$, \eq{condition
with collusion} becomes $f(x,\nBidders) \equiv x^\nBidders -
\nBidders x +1 \geq 0$.

Since $\nBits \geq 2$, we have $x \geq 3$. For this range of $x$ and
for $\nBidders \geq 1$, $f(x,\nBidders)$ is monotonically increasing
in both arguments. To see $f$ is monotonic for $x$, the derivative
of $f(x,\nBidders)$ with respect to $x$ is
$\nBidders(x^{\nBidders-1}-1)$ which is nonnegative since $\nBidders
\geq 1$ and $x > 1$. Similarly, the derivative with respect to
$\nBidders$ is $x(x^{\nBidders-1} \ln(x) - 1)$ which is at least
$3(\ln(3) - 1)> 0$ since $\nBidders \geq 1$ and $x \geq 3$. Thus for
the relevant range of $\nBidders$ and $x$, $f(x,\nBidders) \geq
f(3,1) = 1$ so \eq{condition with collusion} is true for all
$\nBidders \geq 1$ and $\nBits \geq 2$.

Combining these cases for $\nBits=1$ and $\nBits \geq 2$ establishes
the claim.
\end{proof}

Using this claim, we demonstrate the permuted eigenvalue choices
remove incentives to alter the initial amplitudes:

\begin{thm}\label{thm.NE}
If (a) auctioneer chooses eigenvalues as described above, (b)
$\{b^*_i(v)\}_{i=1}^\nBidders$ is an equilibrium for the first-price
classical auction, and (c) bidders include the null set as part of
their bids and use the same operator in each step in the search
except, possibly, for the initial state, then the strategy of using
bidding functions $\{b^*_i(v)\}_{i=1}^\nBidders$ and the same
operator for their initial state as they use in the search is a NE
for corresponding quantum auction.
\end{thm}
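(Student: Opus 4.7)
The plan is to show that under the permuted-eigenvalue construction, any unilateral deviation by a single bidder either only changes the bidding function while keeping $U_i^{\rm init}=U_i$, in which case the classical Nash condition on $\{b^*_i\}$ rules out a profitable change, or introduces amplitude on a single-deviation initial eigenstate, which by construction is routed to an infeasible final allocation and can only dilute the bidder's probability of winning.

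First I would establish the baseline. When every bidder follows the proposed strategy, the joint initial state is $\PsiInit=\ket{0,\ldots,0}$, which by the auctioneer's choice of $d(x)$ is the unique ground eigenvector of the initial Hamiltonian. A perfect adiabatic search sends each initial eigenvector to the final eigenvector of equal rank, so $\PsiInit$ is mapped to the minimum-cost, that is maximum-revenue, feasible allocation in the joint subspace. Under the hypothesis, that joint subspace contains exactly one feasible allocation per potential winner, namely the one whose nonzero bid equals $b^*_i(v_i)$, so this maximum-revenue feasible allocation is precisely the winner of the corresponding classical first-price auction and realized payoffs coincide.

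Next I would analyze a unilateral deviation by bidder $1$ in which $U_1^{\rm init}\neq U_1$, writing $\psi_1^{\rm init}=\sum_i \alpha_i \ket{i}$ in the basis formed by the columns of $U_1$. Since every other bidder still uses the ground-state initialization, the tensor factorization in \eq{U} makes the joint initial state $\sum_i \alpha_i \ket{i,0,\ldots,0}$, a superposition of the true ground state (amplitude $\alpha_0$) and the single-deviation states defined in \sect{initial amplitudes} (for $i\neq 0$). By the auctioneer's choice of $d(x)$ together with the counting inequality of the preceding claim, the single-deviation initial eigenstates occupy the top $\nBidders(2^\nBits-1)$ ranks, so a perfect search sends each of them to an infeasible final eigenstate. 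Linearity of the search combined with orthogonality of the final eigenstates then implies that bidder $1$'s winning event occurs with probability $|\alpha_0|^2$ times the probability in the non-deviating auction, so the expected payoff is scaled by $|\alpha_0|^2 \le 1$ and strictly dominated by the choice $\alpha_0=1$.

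Combining the two observations, the best unilateral response for bidder $1$ must have $U_1^{\rm init}=U_1$, which reduces the auction to a classical first-price auction where bidder $1$'s effective bid is the largest non-$\zero$ value in $U_1$'s subspace; the classical Nash property of $\{b^*_i\}$ then forces that effective bid to equal $b^*_1(v_1)$ in expected utility, and the symmetric argument for every bidder establishes the NE claim. The main obstacle I anticipate is the careful handling of the rank-preserving property of perfect adiabatic search on an initial superposition: one must combine linearity with the eigenstate-to-eigenstate correspondence to conclude that the single-deviation amplitudes are deposited cleanly onto orthogonal infeasible final states with no constructive interference at measurement that could restore winning probability to the deviating bidder. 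A secondary but indispensable step is to confirm that the tensor factorization $U=U_1\otimes\cdots\otimes U_\nBidders$, together with the assumption that every bidder other than bidder $1$ uses the ground state for initialization, really does confine bidder-$1$'s deviation to single-deviation joint states rather than to higher-order deviation states that could straddle other bidders' qubits.
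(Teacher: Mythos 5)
Your proposal is correct and follows essentially the same route as the paper's proof: decompose the unilateral deviation into the initialization-operator choice and the bidding-function choice, note that a deviating initializer produces $\sum_i \alpha_i \ket{i,0,\ldots,0}$, use the permuted $d(x)$ together with the counting condition to send every single-deviation eigenstate to an infeasible allocation under perfect search, conclude the deviator's payoff is $|\alpha_0|^2 V \le V$, and then invoke the classical NE to rule out profitable changes to $b^*_1$. The interference worry you flag is resolved exactly as you suspect — the final eigenstates are orthogonal allocation basis states, so the measurement probabilities add with no cross terms — which is implicit in the paper's definition of a perfect search.
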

\begin{proof}
Without loss of generality, suppose only bidder 1 deviates and all
the other bidders use $\{b^*_i(v)\}_{i=2}^\nBidders$ and the same
operator for initialization and search. Then, as described above,
the initial state $\Psi_0$ is $\sum_i \alpha_i \ket{i,0,\ldots,0}$
for some choice of amplitudes $\alpha_i$, with $i$ ranging from $0$
to $2^\nBits - 1$.

A perfect adiabatic search maps each of these states to a
corresponding allocation. In particular, with
$d(\ket{0,\ldots,0})$ having the smallest value of the function
$d(x)$, the lowest cost allocation is produced with probability
$|\alpha_0|^2$. This allocation corresponds to the highest bid
winning. Moreover, each $\ket{i,0,\ldots,0}$ with $i \neq 0$ has
the largest value of $d(x)$, and so, because of \eq{condition},
maps to an infeasible allocation, giving no winner and hence no
value to bidder 1.

Hence the expected value for bidder 1 is $|\alpha_0|^2 V$ where $V$
is the value of the expected profit of the corresponding classical
auction to bidder 1. Since $|\alpha_0|^2 V \leq V$, bidder 1 cannot
gain from such a deviation.

Furthermore, there is no gain from deviating from the bidding
function $b^*_1(v)$ since it will only decrease $V$, because, by
assumption, $\{b^*_i(v)\}_{i=1}^\nBidders$ is a NE for the
corresponding classical auction.

Because of \eq{condition}, this discussion applies to deviations by
\emph{any} single bidder, not just bidder 1. Thus, using bidding
function $\{b^*_i(v)\}_{i=1}^\nBidders$ and using the same operator
for their initial state as they use in the search is a NE.

\end{proof}

The stronger condition, \eq{condition with collusion}, shows that
the number of infeasible states is enough to give no winner for any
choice of initial amplitudes that up to $\nBidders-1$ bidders can
produce, provided $\nBits \geq 2$. Thus if an auctioneer implements
a collusion-proof classical auction with the quantum protocol and
assigns infeasible states as described then the resulting quantum
auction is collusion-proof up to $\nBidders-1$ bidders for initial
amplitude deviations.

The choice for $d(x)$ satisfying the above requirements is not
unique. As one example, let $x$ be the state index in the full
search space, running from 0 to $2^{\nBidders \nBits}-1$. Consider
$x$ as written as a series of $\nBidders$ base-$2^\nBits$ numbers,
$\ket{x_1,x_2,\ldots,x_n}$. Define
\begin{equation}\eqlabel{d(x) example}
d(x) = -r(x) \pmod{\nBidders+1}
\end{equation}
where $r(x)$ is number of nonzero values among $x_1,x_2,...,x_n$.
The mod operation gives all $d(x)$ values in the range 0 to
\nBidders. For the initial ground state, $x=\ket{0,\ldots,0}$,
$r(x)=0$ so $d(x)=0$, and this is the smallest possible value.
Single-deviation states have exactly one of the $x_i$ nonzero,
giving $r(x)=1$ and $d(x)=\nBidders$, the largest possible value.
More generally, all $k$-deviation states have $r(x)=k$ so
$d(x)=\nBidders+1-k$. This function definition gives values
directly from the representation of the state $x$, so, in
particular, the auctioneer can implement it without any knowledge
of the subspaces selected by the bidders.

The assumption of perfect search is a sufficient but not necessary
condition for the proof of Theorem~\ref{thm.NE}. The necessary
conditions are more complicated because we only need that every
single bidder deviation maps to a linear combination of infeasible
states. Thus mixing among different single-deviation states during
search (e.g., due to small eigenvalue gaps among those states), or
among states corresponding to two or more bidders deviating, does
not affect the proof.

\subsubsection{Bounded Number of Search Steps}\sectlabel{boundedSteps}

Theorem~\ref{thm.NE} shows the quantum auction has the same NE as
the classical first price auction if the search is perfect and each
bidder uses the same operator for every search step of \eq{search
update}. Since adiabatic search, run for a finite number of steps,
is not perfect we examine the effect on the NE of an imperfect
search. We show that the NE for perfect search, i.e., bidding as in
a classical first price auction and using the same operator
initially and during the search, is an $\epsilon$-equilibrium for
the auction with imperfect search. Furthermore, $\epsilon$ converges
to zero as the number of search steps goes to infinity. A strategic
profile is an $\epsilon$-equilibrium~\cite{radner80} if for every
player, the gains of unilateral defecting to another strategy is at
most $\epsilon$. This weaker equilibrium concept is useful in our
case because determining how to exploit imperfect search is
computationally difficult. Specifically, with the small eigenvalue
gaps and degeneracy it is hard to know whether imperfect search
benefits a particular bidder. Thus computational cost will likely
outweigh the small possible gain. In this situation, an
$\epsilon$-equilibrium is a useful generalization of NE.

We must prove that for any $\epsilon$ there exists an $N$ so that if
the search process uses at least $N$ steps, the equilibrium of the
game with a perfect search is also an $\epsilon$-equilibrium when
using the actual search. To do so, we bound the possible gain from
deviation based on prior knowledge of the range of possible bidder
values. That is, we assume the distribution of values has a finite
upper bound \vBound. In our context, one such bound is the maximum
bid value expressible by the announced interpretation of each
bidders qubits.

\begin{thm}
If the conditions of Theorem~\ref{thm.NE} are met, and assuming the
possible bidder values are bounded by \vBound, for any $\epsilon>0$,
there exists an $N$ so that the NE in the quantum auction with a
perfect search, shown in Theorem~\ref{thm.NE}, is also an
$\epsilon$-equilibrium of the same auction with an imperfect search
using $N$ search steps.
\end{thm}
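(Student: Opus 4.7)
The plan is to use the adiabatic theorem to bound the distance between the $N$-step outcome distribution and the perfect-search outcome distribution, and then translate that distance into a bound on expected payoffs using the fact that per-outcome payoffs are bounded by $\vBound$. Since Theorem~\ref{thm.NE} already shows that the gain from any unilateral deviation is at most $0$ under perfect search, a sufficiently small discrepancy between the two distributions immediately yields an $\epsilon$-equilibrium.

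First I would note that in a single-item first-price auction a bidder's realized profit for any outcome lies in $[0,\vBound]$: either $v_i - b_i$ if bidder $i$ wins with bid $b_i \leq v_i$, or $0$ (no winner, or another bidder wins). Hence the expected payoff is a linear functional of the outcome distribution with coefficients bounded by $\vBound$. Next I would invoke the convergence of the discretized adiabatic evolution in \eq{search update}: for any fixed strategy profile, and hence a fixed spectrum of the joint Hamiltonian, the total variation distance $\delta_N$ between the $N$-step measurement distribution and the perfect-search distribution tends to $0$ as the auctioneer increases $N$ and decreases $\Delta$. Combining the two observations gives, for any bidder $i$ and strategy profile $\sigma$,
\begin{equation}
|U_i(\sigma;\text{imperfect}) - U_i(\sigma;\text{perfect})| \le \vBound\,\delta_N.
\end{equation}

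Applying this to both the equilibrium profile and any single-bidder deviation, and using Theorem~\ref{thm.NE} which says the gain under perfect search is non-positive, the gain from deviation in the $N$-step auction is at most $2\vBound\,\delta_N$: the equilibrium payoff can drop by at most $\vBound\delta_N$ and the deviation payoff can rise by at most $\vBound\delta_N$ relative to their perfect-search values. Choosing $N$ large enough that $2\vBound\,\delta_N \le \epsilon$ then establishes the $\epsilon$-equilibrium.

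The main obstacle is obtaining uniform convergence over the set of possible deviations. The adiabatic rate depends on the minimum eigenvalue gap of the joint Hamiltonian, and a deviating bidder influences that Hamiltonian through their choice of operator and initial state. I would handle this by observing that each bidder's operator lies in the compact set of unitaries on $2^\nBits$ dimensions, the other bidders are held at the equilibrium profile, and the minimum gap of the resulting joint Hamiltonian (with the auctioneer's choice of $d(x)$) is a continuous, strictly positive function on this compact set. Hence the minimum gap is bounded below by a positive constant uniformly over deviations, and a single $N$ suffices. This continuity-on-a-compact-set step, rather than the payoff-bounding computation, is the delicate part of the argument.
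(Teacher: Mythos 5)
Your payoff-bounding step is essentially the paper's argument: the paper sets $\delta$ to be an upper bound on the probability $p_o$ that some feasible bid other than the highest wins, bounds the deviation payoff by $(1-\delta)\pi^*_i(v)+\delta\vBound$, concludes the gain is at most $\delta\vBound$, and takes $\delta=\epsilon/\vBound$; your total-variation version with the factor $2\vBound\,\delta_N$ is the same idea in slightly different packaging, and that part is fine.

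The genuine gap is in the uniformity step, which you correctly identify as the delicate point but do not actually close. Your argument needs the minimum spectral gap of the interpolating Hamiltonian to be a \emph{strictly positive} continuous function on the compact set of deviating unitaries, and only then does compactness give a uniform lower bound. Strict positivity is asserted, not proved, and it is doubtful here: the initial Hamiltonian defined by $d(x)$ is massively degenerate by design (all $\nBidders(2^\nBits-1)$ single-deviation states are assigned the same, largest value of $d$), the final Hamiltonian is degenerate whenever two bidders tie or whenever several infeasible allocations share the cost $-F=1$, and the paper itself warns that bidders may deliberately engineer near-closings of the gap. A continuous gap function that vanishes at even one point of the compact set has no positive lower bound, so the compactness argument collapses exactly where it is needed. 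A second, related issue is that your total-variation distance $\delta_N$ between the full outcome distributions is a stronger quantity than the proof requires: as the paper observes after Theorem~\ref{thm.NE}, amplitude mixing \emph{among} the deviation states is harmless because they all map to infeasible allocations, so the only thing that must be controlled is the probability of leaking into a feasible non-optimal allocation ($p_o$), not the distance between the whole distributions. Controlling $p_o$ requires only that the ground state remain isolated, not that the entire (degenerate) spectrum have a uniform gap; your formulation demands the latter and therefore fails precisely in the degenerate situations that the construction of $d(x)$ creates on purpose. In fairness, the paper's own proof also does not establish convergence uniformly over deviations --- it simply posits ``nonzero eigenvalue gaps'' and asserts the existence of $N$ --- so you have put your finger on the real weakness of the theorem; but your proposed repair does not supply the missing argument.
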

\begin{proof}

Let $p_h$ be the probability of the highest bid wins. Let $p_{\rm
inf}$ be the probability of reaching an infeasible state. Then
$p_o=1-p_h-p_{\rm inf}$ is the probability of a bid other than the
highest bid wins.

With the adiabatic search, with nonzero eigenvalue gaps, the
probability of correctly mapping the initial to final states
converges to one as the number of search steps increases. Thus for
any $\delta>0$, there always exists a $N$ where $p_o$ is at most
$\delta$.

We define an equilibrium expected payoff function for bidder $i$
with value $v$ as $\pi^*_i(v)$, when all bidders use their
equilibrium bidding functions.

Without loss of generality, from the perspective of bidder $i$ with
value $v$, the probability of achieving the equilibrium payoff,
$\pi^*_i(v)$, if that bidder does not deviate is $1-\delta$. Thus
the expected payoff of deviating is at most $\pi^{\rm deviate}_i(v)
\leq (1-\delta)\pi^*_i(v) + \delta \vBound$ because (a) the most any
bidder can gain is bounded by \vBound, and (b) with probability
$1-\delta$ the auction either produces no profit ($p_{\rm inf}$) or
is identical to a classical auction ($p_h$).

The expected gain $g$ from deviating is the expected payoff from
deviating minus the expected payoff with no deviation, i.e.,
$g=\pi^{\rm deviate}_i(v)-\pi^*_i(v) \leq  \delta
(\vBound-\pi^*_i(v))$, which in turn is at most $\delta \vBound$.
Thus for any choice of $\delta$, there always exists an $N$ where
the maximum deviation benefit is at most $\delta \vBound$.

For any $\epsilon>0$, using $\delta=\epsilon / \vBound$ in the above
discussion shows there always exists an $N$ where the deviation is
at most $\epsilon$.
\end{proof}

\subsection{Testing for Changed Operators During
Search}\sectlabel{test change during search}

One approach to the incentive issue of changing operators during
search, described in \sect{change during search}, is for the
auctioneer to test the bidders by randomly inserting additional
probe steps in the search.

Specifically, suppose at any step of the search the auctioneer, with
some probability, decides to check a bidder by sending a new set of
qubits in a known state $\ket{\phi}$, while storing the qubits for
the search until a subsequent step. For the test step, the
auctioneer sets $D$ or $P$ to the identity operator. The state
returned by the bidder is then $U_i^\prime U_i^\dagger \ket{\phi}$
or $U_i^{\prime \dagger} U_i \ket{\phi}$, depending on which part of
the search step in \eq{search update} the auctioneer is testing.
Without loss of generality, we consider the former case.

Ideally, the bidder uses the same operator, so $U_i^\prime = U_i$
and $U_i^\prime U_i^\dagger$ is the identity. Suppose the test state
is formed from some operator $V$, randomly selected by the
auctioneer, $\ket{\phi} = V \ket{0}$. If $U_i^\prime U_i^\dagger$ is
not the identity, the returned state has the form $\alpha \ket{\phi}
+ \beta \ket{\phi_\perp}$, where $\ket{\phi_\perp}$ is some state
orthogonal to $\ket{\phi}$ and $|\alpha|^2 + |\beta|^2=1$. The
auctioneer then applies $V^\dagger$, giving
\begin{equation}
\alpha \ket{0}  + \beta \ket{a}
\end{equation}
for some value $a \neq 0$. The auctioneer then measures this
state, getting $0$ with probability $|\alpha|^2$, indicating the
bidder passes the test. Otherwise, the auctioneer observes a
different value, indicating the bidder changed the operator.

Hence the chance of getting caught depends on how often the
auctioneer checks, and how big a change the bidder makes in the
operator. Larger operator changes are more likely to be caught. This
testing behavior is appropriate as small changes are not likely to
have much affect on the search outcome, and instead simply act as an
alternate adiabatic path from initial to final states. This
technique is especially useful for risk averse bidders since then
even a small chance to be caught might be enough to prevent bidders
from wanting to change operators.

\subsection{Assigning Eigenvalues to Subspaces}\sectlabel{auctioneer}

Quantum search acts on the full space of superpositions of the
available qubits, i.e., in our case to all $2^{\nBidders \nBits}$
configurations of items and bids. In the auction context, bidders
choose operators to restrict the search to a subspace of possible
bids, namely the ones they wish to make. Conceptually, the search
described above is then restricted to the subspace selected by the
bidders.

The search can also be viewed as taking place in the full space of
$2^{\nBidders \nBits}$ configurations. The operator $U$ appearing in
the search algorithm is block diagonal (up to a permutation of the
basis states), with only the block operating on the selected
subspace relevant for the search outcome. This view of the search is
that of the auctioneer, who has no prior knowledge of the subspace
selected by each bidder. The operator $U$ is not known to any single
individual: instead its implementation is distributed among the
bidders, with each bidder implementing a part of the overall
operator. The auctioneer chooses the eigenvalues for the initial
Hamiltonian and the ordering for the qubits assigned to each bidder.
These choices, which could change during the search, affect the
incentive structure of the auction as described in \sect{behavior}.

This section describes how the auctioneer's choice of $d(x)$ can
give the same eigenvalues when restricted to the subspace actually
selected for the search. For simplicity, we suppose each bidder uses
a 2-dimensional subspace, consisting of \zero\ and the desired bid
for the single item. While not essential for the NE results
discussed above, uniformity with respect to subspace choices means
bidders are treated uniformly, so convergence of the search is
independent of the order in which the auctioneer considers the
bidders.

\subsubsection{An Example}

Consider $\nBidders=2$ bidders, each with $\nBits=2$ bits,
representing 4 values: \zero\ and three bid values $1,2,3$. A set of
2-bit operators to form a uniform superposition of the form
$(\ket{\zero}+\ket{b})/\sqrt{2}$ where $b$ is the bid value, 1, 2 or
3, is $1/\sqrt{2}$ times
\begin{displaymath}
\begin{pmatrix}
1&-1&0&0&\cr 1&1&0&0&\cr 0&0&1&-1\cr 0&0&1&1\cr
\end{pmatrix}
\begin{pmatrix}
1&0&-1&0&\cr 0&1&0&-1&\cr 1&0&1&0\cr 0&1&0&1\cr
\end{pmatrix}
\begin{pmatrix}
1&0&0&-1&\cr 0&1&1&0&\cr 0&-1&1&0\cr 1&0&0&1\cr
\end{pmatrix}
\end{displaymath}
which we can denote as $A_1,A_2,A_3$, respectively, with the first
columns giving the uniform superposition of the three possible bid
values. If the bidders select bids $b_1,b_2$, respectively, the
overall operator for the search is $U=A_{b_1} \otimes A_{b_2}$, used
in \eq{search update} to perform each step of the search. Thus in
this case there are 9 possible subspaces the two bidders can jointly
select. Up to a permutation, $U$ is block diagonal with the block
containing the nonzero entries of the first column, and hence all
the nonzero amplitude during the search, equal to
\begin{displaymath}V=
\begin{pmatrix}
1&-1&-1&1&\cr 1&1&-1&-1&\cr 1&-1&1&-1\cr 1&1&1&1\cr
\end{pmatrix}
\end{displaymath}
The search using $U$ in the full 4-bit space is thus equivalent to
one taking place in the 2-bit subspace selected by the two bidders
using this operator $V$.

The auctioneers' choice of eigenvalues, i.e., the function $d(x)$
used in \eq{search H0} should ensure the uniform superposition
within the subspace defined by the two bidders has the lowest value,
say 0, and all other eigenstates have larger values.

One possibility is the standard choice for the diagonal values
$d(x)$ when searching in the full space of $2^4$ states defined by
the $\nBidders \nBits = 4$ bits, namely the Hamming weight of each
state, i.e., the number of 1 bits in its binary representation,
ranging from 0 to 4.

An alternative approach is picking $d(x)$ so eigenvalues for the
four states appearing in $V$ have the same values as they would have
with using the Hamming weight for a 2-bit search, ranging from 0 to
2. Doing so requires selecting the eigenvalues to match the
corresponding Hamming weights for any choices the bidders make among
$A_1,A_2,A_3$. In this example, each bidder has 2 qubits, so can
represent 4 states, which we denote as $\ket{0},\ldots,\ket{3}$. The
states for both bidders are products of these individual states,
$\ket{0,0},\ldots,\ket{3,3}$. Examining the 9 possible cases for
$U$, shows a consistent set of choices is $d(\ket{x,y})$ equal to
the number of nonzero values among $x,y$. With this $d(x)$, the
adiabatic search in the subspace selected by the bidders is
identical to the standard adiabatic search for two bits. This choice
treats both bidders identically.

In this case we see the auctioneer can arrange the adiabatic search
to operate symmetrically no matter what choice of subspace each
bidder makes (i.e., no matter what value each bidder decides to
bid). Thus from the point of view of the bidders, the search, in
effect, takes place within the subspace of possible values defined
by their bid selections.

\subsubsection{General Case}

For arbitrary numbers of bidders $\nBidders$ and bits $\nBits$, we
consider a single-item auction so each bidder would, ideally, pick
an operator giving just two terms, with $b^{(j)}$ the bid of bidder
$j$ for the single item and no bits needed to specify which item the
bidder is interested in. The choice of $b^{(j)}$ corresponds to the
bidder picking a 2-dimensional subspace of the $2^\nBits$ possible
states. The product of these subspaces gives a subspace $S$ of all
$\nBidders \nBits$ qubits used in the auction. The subspace $S$ has
dimension $2^\nBidders$ and its states $x_S$ can be viewed as
strings of $\nBidders$ bits. More specifically, we suppose bidder
$j$ implements the operator $U_j$ such that the rows and columns
corresponding to \zero\ and $b^{(j)}$ have nonzero values only for
positions \zero\ and $b^{(j)}$. That is, the elements of $U_j$ for
these two values form a $2\times 2$ unitary matrix.

If the auctioneer knew the subspace $S$, the eigenvalue function
$d(x)$ used in \eq{search H0} could be selected to match any desired
function $d_S(x_S)$ of the states in $x_S \in S$. Without such
knowledge, this is possible only for some choices for $d_S$.

\begin{thm}
Provided $d_S(x_S)$ depends only on the Hamming weight of the states
$x_S$, a single choice of $d(x)$ in the full space corresponds to
$d_S(x_S)$ in all possible subspaces the bidders could select that
include the null set.
\end{thm}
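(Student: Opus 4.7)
The plan is to exhibit a single $d(x)$ on the full $\nBidders\nBits$-qubit space whose restriction to every admissible joint subspace $S$ equals the prescribed $d_S$. First I would decompose each basis vector of the full space as $\ket{x_1,\ldots,x_\nBidders}$ with $x_j\in\{0,\ldots,2^\nBits-1\}$, and define $r(x)$ to be the number of indices $j$ for which $x_j\neq 0$. Since bidder $j$ selects a 2-dimensional subspace containing only $\ket{\zero}$ and $\ket{b^{(j)}}$ with $b^{(j)}\neq 0$, the joint subspace is $S = \mathrm{span}\{\ket{x_1,\ldots,x_\nBidders} : x_j \in \{0,b^{(j)}\}\}$, and it has a canonical bijection with $\{0,1\}^\nBidders$ sending $\ket{x_1,\ldots,x_\nBidders}$ to the string $x_S$ whose $j$-th bit is $1$ iff $x_j = b^{(j)}$.

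The key observation is that under this bijection, the Hamming weight $w(x_S)$ equals $r(x)$, because both quantities count the coordinates in which the corresponding bidder took the non-null value. Crucially, $r(x)$ is defined using only the zero/nonzero distinction, so its value on any basis vector of $S$ does \emph{not} depend on which particular $b^{(j)}$ each bidder selected. This is the invariance that allows one global function to reproduce $d_S$ across every admissible $S$ simultaneously.

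Given the hypothesis, I would then write $d_S(x_S) = g(w(x_S))$ for some function $g:\{0,\ldots,\nBidders\}\to\mathbb{R}$ and define $d(x) := g(r(x))$ on the full space. For every admissible $S$ and every basis vector $x\in S$ with image $x_S$, this gives $d(x) = g(r(x)) = g(w(x_S)) = d_S(x_S)$, as required. This construction is of precisely the form illustrated in \eq{d(x) example} and carries the operational advantage mentioned there: because $d$ reads only the coordinate-wise nonzero pattern of $x$, the auctioneer can evaluate $d(x)$ from the raw bit-string without ever learning the bidders' chosen subspaces.

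I do not anticipate a serious obstacle. The only point requiring care is checking that the coordinate-wise nonzero pattern really does coincide with $w(x_S)$ in every admissible $S$, which is immediate from the tensor-product structure once we note that the hypothesis ``include the null set'' is essential: it guarantees $\ket{\zero}$ is one of the two basis vectors in every bidder's subspace, so that the zero/nonzero encoding used to define $r$ is well posed on every $x\in S$.
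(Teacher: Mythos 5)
Your proof is correct and follows essentially the same route as the paper: both arguments establish the bijection between the selected subspace and $\{0,1\}^{\nBidders}$ under which Hamming weight equals the number of nonzero coordinates $r(x)$, observe that $r(x)$ is independent of which bid values $b^{(j)}$ were chosen, and then set $d(x)=g(r(x))$. The paper merely arrives at the description of the subspace by examining which entries of the first column of $U=U_1\otimes\cdots\otimes U_\nBidders$ are nonzero, whereas you take that description directly from the bidders' choices; this is a presentational difference, not a different argument.
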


\begin{proof}
Consider the full operator $U$ given by \eq{U}. For the element
$U_{x,y}$, express the $\nBidders \nBits$ bits defining the states
$x$ and $y$ as sequences of $\nBits$-bit values,
$x_1,\ldots,x_\nBidders$ and $y_1,\ldots,y_\nBidders$, respectively,
with each $x_i$ and $y_i$ between 0 and $2^\nBits-1$. From \eq{U},
\begin{displaymath}
U_{x,y} = \prod_{i=1}^\nBidders (U_i)_{x_i,y_i}
\end{displaymath}
The matrix $U$ is of size $2^{\nBidders \nBits} \times
2^{\nBidders \nBits}$ while each $U_i$ is of size $2^{\nBits}
\times 2^{\nBits}$.

Consider the first column of $U$, i.e., $y=0$. $U_{x,0}$ is nonzero
only for those $x$ such that all the $(U_i)_{x_i,0}$ are nonzero.
For this to be the case, each $x_i$ is either 0 (corresponding to
$\ket{\zero}$ for that bidder's superposition) or $x_i = b^{(i)}$,
i.e., the bid value. Similarly, for all columns with each $y_i$
equal to 0 or $b^{(i)}$. These values for $x,y$ are precisely the
states in the selected subspace of the bidders, $S$. For these
choices of $x_i,y_i$, we can map 0 (i.e., $\nBits$ bits all equal to
zero) to the single bit 0, and each $b^{(i)}$ (specified by values
for $\nBits$ bits) to the single bit 1. This establishes a
one-to-one mapping from states in the full space, of $\nBidders
\nBits$ bits corresponding to the product of  bidders'
superpositions, to states in the subspace treated as $\nBidders$-bit
vectors. Thus a function $d_S(x_S)$ applied to the subspace that
depends on the Hamming weight, i.e., the number of 1 bits in $x_S$,
is the same as a function on the full space depending on the number
of nonzero $x_i$ values in $x = x_1,\ldots,x_\nBidders$.

We must show that a single choice of function $d(x)$ in the full
space maps to the desired $d_S(x_S)$ in {\em any} choice of bidder
subspaces. To see this is the case, consider any state in the full
space $x = x_1,\ldots,x_\nBidders$. Among these $x_i$, suppose $h$
are nonzero, denoted by $x_{a_1},\ldots,x_{a_h}$. This state $x$
will appear in all selected subspaces in which bidder $a_j$ bids
$b^{(a_j)} = x_{a_j}$, for $j=1,\ldots,h$, and the remaining bidders
have any choice of bid. That is, $x$ appears in
$(2^\nBits-1)^{\nBidders-h}$ possible subspaces $S$. Since $x$ has
exactly $h$ nonzero values, in each of these possible subspaces it
maps to a state $x_S$ with exactly $h$ bits equal to 1, i.e., it has
the same Hamming weight, $h$, in all possible subspaces in which it
appears. Thus any choice of function $d_S(x_S)$ depending only on
the Hamming weight of $x_S$ will have the same value in all these
possible subspaces. This observation allows the auctioneer to select
that common value as the value for $d(x)$, consistently giving the
desired eigenvalue function for any possible subspace. Since this
holds for all values of $h$, the auctioneer can operate in the full
space with identical search behavior no matter what subspace the
bidders select.
\end{proof}

For the auctioneer to operate without knowledge of the actual
subspace selected by the bidders and treat bidders identically, we
need $d(x)$ to map to the same function on any subspace selected. In
this case, the search proceeds exactly as if the auctioneer did know
the subspace choices made by the bidders. The theorem gives one type
of function for in which this is the case. In particular, \eq{d(x)
example} is an example of a function satisfying this theorem.

\section{Multiple Items and Combinatorial Auction}\sectlabel{Multiple}

While the paper focuses on the single item first-price sealed-bid
auction, the quantum protocol can apply to multiple items by
changing the interpretation of the bids, i.e., the bidding language.
Such changes affect the counting of deviation and feasible states,
so we must check the validity of Theorem~\ref{thm.NE}.

In the single item case, each bidder uses the \nBits\ qubits to
specify the bid amount. With multiple items, the bid must specify
both the items of interest and a bid amount for the items. Various
bidding languages can encode this information.

For multiple items, we divide the $\nBits$\ qubits allocated to each
bidder into two parts: $\nBitsi$ bits to denote a bundle of items
and $\nBitsp$ bits to denote bid value (so $\nBits = \nBitsi +
\nBitsp$). Since qubits are expensive, a succinct representation of
items is best. Depending on the type of auction, we have various
choices with different efficiency in using bits. For example, the
$\nBitsi$ item bits could indicate the item in the bid, allowing
$\nBitsi$ qubits to specify up to $2^\nBitsi$ different items.
Another case is multiple units of a single item, so $\nBitsi$ could
specify how many units a bidder wants (with the understanding the
bid is for all those units not a partial amount) so the bits could
specify $2^\nBitsi$ different numbers. In the general case, bids are
on arbitrary sets of items or \emph{bundles}, and we represent a
bundle with \nItems\ bits, 1 if the corresponding item is a part of
the bundle and 0 otherwise, i.e., $\nItems =\nBitsi$. We focus on
this general case in the remainder of the section. Allowing bids on
sets of items is called a combinatorial auction~\cite{cramton06}.

With multiple items, the bid operator $\psi_{j} = U_{j} \psiInit$
gives a superposition of bids of the form $\ket{I_{i}^{(j)},
b_{i}^{(j)}}$ where $b_{i}^{(j)}$ is bidder $j$'s bid for a bundle
of items $I_{i}^{(j)}$. In this notation, the null bid is
$\ket{\zero,b}$, and the specified amount $b$ is irrelevant so we
take it to be zero in the examples. A superposition specifies a set
of distinct bids, with at most one allowed to win.

\begin{eg}Consider a combinatorial auction with two
items $X$, $Y$ and integer prices ranging from 0 to 3. With $\nBits
= 4$ bits for each bidder, using 2 bits each to specify item bundles
and prices, is sufficient to specify the bids. The full space for a
bidder has dimension $2^\nBits = 16$, consisting of $4$ possible
item bundle choices and $4$ price choices. Suppose a bidder places a
bid
\begin{displaymath}
\frac{1}{\sqrt{3}}(\ket{\zero,0} + \ket{X,1}+ \ket{(X,Y),2})
\end{displaymath}
i.e., a bid of 1 for item X alone, and 2 for the bundle of both
items. In this case, the bidder is not interested in item $Y$ by
itself. The dimension of the subspace of this bid is 3. Another
example is the bid
\begin{displaymath}
\frac{1}{\sqrt{4}}(\ket{\zero,0} + \ket{X,1}+ \ket{X,3} +
\ket{(X,Y),4})
\end{displaymath}
The dimension of the subspace is 4. This superposition has multiple
bids on the same item $X$.
\end{eg}

This bidding language is both expressive and compact. For instance,
a superposition of bundles of items readily expresses exclusive-or
preferences, where a bidder wants at most one of the bundles. It is
also compact because superpositions allow the bidder to use exactly
the same qubits to place no bid (i.e., \zero) and to place all the
exponential number of bundles in a combinatorial auction.

An allocation, as defined in \sect{bids}, is a list of bids, one
from each bidder. With multiple items, an allocation is feasible if
the item sets are pairwise disjoint. As in the single item case, we
consider the allocation when all item sets are empty as infeasible.
The value of a feasible allocation is the sum total of the bid
values of the different bids in the allocation. The number of
feasible states is $((\nBidders + 1) ^ \nItems -1)2^{\nBidders
\nBitsp}$. This is because we can assign \nItems\ items among
\nBidders\ bidders where all items need not be allocated in
$(\nBidders + 1) ^ \nItems$ ways. The factor $\nBidders + 1$ allows
for some items to remain unallocated. Since the allocation when all
bidders place the null bid is an infeasible state, we subtract 1.
Each bidder can specify $2^\nBitsp$ different prices for the bundle
giving $2^{\nBidders \nBitsp}$ possible choices for \nBidders\
bidders. Note that the number of feasible states for a single item,
$\nItems = 1$, is different from that in \sect{behavior} because
here we have changed the bidding language to represent items also.

The null bid in our protocol simplifies the evaluation of
allocations for combinatorial auctions. To see this, consider a
protocol without the null bid. In a single item case, $F(x)$ for any
allocation vector $x$ would be maximum of the bids placed by the
different bidders on the item, which is fairly easy to compute. But
in the case of multiple items, there could be several allocations
for a vector $x$. For example suppose Alice bids on the set
$\{A,B\}$ and Bob bids on $\{B,C\}$. Without the null set then both
bids appear in the same state and have to be evaluated by $F(x)$.
The possible allocation to the bidders are
\begin{enumerate}
\item none to either
\item $\{A,B\}$ to Alice
\item $\{B,C\}$ to Bob, and
\item $\{A,B\}$ to Alice and $\{B,C\}$ to Bob (which is
infeasible)
\end{enumerate}
$F(x)$ will have to compute the maximum of the values in all these
states. This is computationally complex when there are many items.
By contrast, the bidding language with the null bid avoids this
combinatorial evaluation within the search function $F(x)$.

As in the case of single item auctions, we restrict ourselves to a
one-shot sealed bid classical combinatorial auction that we
implement in a quantum setting. The total number of states is
$2^{\nBits \nBidders}$ and the total number of single bidder
deviations states is $\nBidders(2^{\nBits} -1)$. These expressions
are the same as the single item case. The condition for all
single-deviation states to be mapped to infeasible allocations,
resulting in no winner, is
\begin{equation}\eqlabel{multiItemCondition}
2^{\nBidders \nBits} - ((\nBidders + 1) ^ \nItems -1)2^{\nBidders
\nBitsp } \geq \nBidders (2^{\nBits} - 1)
\end{equation}
This condition holds for cases relevant for auctions as seen in the
following claim.
\begin{clm} \label{multiClaim}
\eq{multiItemCondition} is true for all integers $\nItems, \nBitsp
\geq 1$ and $\nBidders \geq 2$.
\end{clm}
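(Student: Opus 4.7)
The plan is to rearrange \eq{multiItemCondition} into a form that separates the roles of the three parameters and then case-split on \nBidders. Substituting $\nBits = \nItems + \nBitsp$ and pulling the common factor $2^{\nBidders\nBitsp}$ out of the left-hand side turns the claim into
\begin{equation*}
2^{\nBidders\nBitsp}\bigl[\,2^{\nBidders\nItems} - (\nBidders+1)^{\nItems} + 1\,\bigr] \;\geq\; \nBidders\bigl(2^{\nItems+\nBitsp}-1\bigr).
\end{equation*}
Because $2^{\nBidders} \geq \nBidders+1$ for $\nBidders \geq 1$, the bracketed factor is strictly positive; the work is to show it is large enough to swamp the right-hand side.

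The case $\nBidders \geq 3$ is the easier one. There the ratio $(\nBidders+1)/2^{\nBidders}$ is at most $1/2$, so $(\nBidders+1)^{\nItems} \leq 2^{\nBidders\nItems-1}$ and the bracket is at least $2^{\nBidders\nItems-1}$. This bounds the left-hand side below by $2^{\nBidders\nBits-1}$ and reduces the claim to $2^{(\nBidders-1)\nBits-1} \geq \nBidders$. Since $\nBits \geq 2$, the exponent is at least $2\nBidders-3$, and $2^{2\nBidders-3} \geq \nBidders$ is immediate for $\nBidders \geq 3$.

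For $\nBidders = 2$ the gap between $2^{\nBidders}$ and $\nBidders+1$ is smallest, so I would work directly with $4^{\nItems} - 3^{\nItems} + 1$. The elementary inequality $(3/4)^{\nItems} \leq 3/4$ gives $4^{\nItems} - 3^{\nItems} \geq 4^{\nItems-1}$, whence the left-hand side is at least $2^{2\nItems + 2\nBitsp - 2}$. This dominates the right-hand side $2(2^{\nItems+\nBitsp}-1) < 2^{\nItems+\nBitsp+1}$ whenever $\nItems + \nBitsp \geq 3$. The only remaining configuration $\nItems = \nBitsp = 1$ can be verified by direct substitution (LHS $= 8$, RHS $= 6$).

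The hardest part is the near-tight boundary case $\nBidders = 2, \nItems = \nBitsp = 1$, where essentially no slack exists; everywhere else the exponential separation between $2^{\nBidders}$ and $\nBidders+1$ gives ample room. The case split on \nBidders isolates exactly where that separation is weakest and leaves only a single configuration to verify by hand.
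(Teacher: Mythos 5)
Your proof is correct, and it takes a genuinely different route from the paper's. The paper does not prove \eq{multiItemCondition} directly for $\nItems \geq 2$: it proves the stronger collusion condition \eq{multiItemCollusion}, which after cancelling $2^{\nBidders\nBitsp}$ and discarding the $2^{-\nBitsp}$ term reduces to the clean two-variable inequality $(2^{\nItems}-1)^{1/\nItems} \geq (\nBidders+1)^{1/\nBidders}$; this is then settled by calculus, showing the left side is increasing in $\nItems$, the right side is decreasing in $\nBidders$, and both equal $\sqrt{3}$ at $\nItems=\nBidders=2$. You instead factor $2^{\nBidders\nBitsp}$ out of the left-hand side and attack the claim itself with an elementary case split on $\nBidders$, using only the exponential separation $(\nBidders+1)^{\nItems} \leq 2^{\nBidders\nItems-1}$ (for $\nBidders\geq 3$) or $4^{\nItems}-3^{\nItems}\geq 4^{\nItems-1}$ (for $\nBidders=2$), leaving a single boundary configuration to check by hand. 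What each approach buys: the paper's argument yields the strictly stronger statement \eq{multiItemCollusion}, which is needed later in the section to conclude that coalitions of up to $\nBidders-1$ bidders cannot profit from initial-amplitude deviations, whereas your argument establishes only the single-deviation condition as literally stated in the claim. On the other hand, your argument is derivative-free and handles $\nItems=1$ uniformly for all $\nBidders\geq 2$, while the paper's monotonicity argument requires $\nItems\geq 2$ and explicitly verifies only the $(\nBidders,\nItems)=(2,1)$ corner, leaving $\nItems=1$ with $\nBidders\geq 3$ implicit. If you wanted to match the paper's later use of the claim, you would need to strengthen your right-hand side from $\nBidders(2^{\nBits}-1)$ to $2^{\nBidders\nBits}-1-(2^{\nBits}-1)^{\nBidders}$, which your current bounds do not immediately give.
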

\begin{proof}
Recall $\nBits = \nItems + \nBitsp$. We prove a stronger condition
for integers $\nBidders, \nItems \geq 2$, i.e., there exists enough
infeasible states to handle joint deviations up to $\nBidders -1$
bidders. The number of $k$-bidder deviation states is the same as
the single-item case, i.e., \eq{k deviation states}. Thus this
stronger condition, with the same right-hand side as \eq{condition
with collusion}, is
\begin{equation}\eqlabel{multiItemCollusion}
2^{\nBidders \nBits} - ((\nBidders + 1) ^ \nItems -1)2^{\nBidders
\nBitsp } \geq  2^{\nBidders \nBits} - 1 - (2^{\nBits} -1)^\nBidders
\end{equation}
Hence \eq{multiItemCollusion} is true if
\begin{eqnarray*}
(2^{\nBits} -1)^\nBidders & \geq & ((\nBidders + 1) ^ \nItems
-1)2^{\nBidders \nBitsp } \\
\Leftrightarrow 2^{\nBitsp \nBidders}(2^{\nItems} -
2^{-\nBitsp})^\nBidders & \geq & ((\nBidders + 1) ^ \nItems
-1)2^{\nBidders \nBitsp } \\
\Leftrightarrow (2^{\nItems} - 2^{-\nBitsp})^\nBidders & \geq &
(\nBidders + 1)^ \nItems -1
\end{eqnarray*}
Since $2^{-\nBitsp} \leq 1$, \eq{multiItemCollusion} is true if
$$(2^{\nItems} - 1)^\nBidders \geq (\nBidders + 1)^ \nItems -1$$
which is true if
$$(2^{\nItems} - 1)^{\frac{1}{\nItems}} \geq (\nBidders + 1) ^
{\frac{1}{\nBidders }}$$

\noindent Let $f(\nItems) \equiv (2^{\nItems} -
1)^{\frac{1}{\nItems}}$ and $g(\nBidders) \equiv (\nBidders + 1) ^
{\frac{1}{\nBidders }}$. We establish the required inequality,
$f(\nItems) \geq g(\nBidders)$, by showing $f(\nItems)$ is
increasing in $\nItems$ when $\nItems \geq 2$, $g(\nBidders)$ is
decreasing in $\nBidders$ when $\nBidders \geq 2$ and noting $f(2) =
g(2) = \sqrt{3}$.

\noindent Taking the derivative of $f(\nItems)$ with respect to
$\nItems$, we get,
$$\frac{(2^{\nItems} - 1)^{\frac{1}{\nItems}}}{\nItems} \left(
\frac{2^\nItems \ln(2)}{2^{\nItems} - 1} - \frac{\ln(2^{\nItems} -
1)}{m}\right)$$ This is positive if and only if
$$ \frac{2^\nItems}{2^\nItems -1} \frac{\nItems}{\log_2(2^\nItems - 1)}
> 1$$ This is true because $\log_2(2^\nItems - 1) < \log_2
(2^{\nItems})  = \nItems$ and hence both fractions in the expression
are greater than 1. Thus, $f(\nItems)$ is increasing for all
$\nItems \geq 2$.

\noindent Taking derivative of $g(\nBidders)$ with respect to
$\nBidders$, we get,
$$ \frac{(\nBidders + 1)^{\frac{1}{\nBidders }}}{\nBidders}\left(\frac{1}{1+\nBidders} - \frac{\ln(1+\nBidders)}{\nBidders}\right)$$
This is negative if and only if
$$ \ln(1+\nBidders) \frac{1+\nBidders}{\nBidders}
> 1$$ This is true for $\nBidders \geq$ 2. Thus $g(\nBidders)$ is
decreasing in $\nBidders$ for $\nBidders \geq 2$.

Thus we have shown that \eq{multiItemCondition} is true for
$\nBidders, \nItems \geq 2$. It can be easily checked that
\eq{multiItemCondition}, is not true for $\nBidders = 1$ and true
when $\nBidders = 2$ and $\nItems =1$.
\end{proof}

Thus, if a classical combinatorial auction has a NE then the
corresponding quantum auction protocol also has a NE with respect to
initial state deviations. Also there is an $\epsilon$-equilibrium of
the same auction with an imperfect search using $N$ search steps.
Moreover, the stronger condition of \eq{multiItemCollusion} shows
that in auctions with at least two bidders ($\nBidders
>1$), there are enough infeasible states to give no winner
for any deviation of initial amplitudes that up to $\nBidders -1$
can produce. Thus no groups, up to size $\nBidders -1$, can collude
to benefit from initial amplitude deviations in the quantum auction.

\section{Applications of Quantum Auctions}\sectlabel{applications}

Two properties of quantum information may provide benefits to
auctioneers and bidders: the ability to compactly express
complicated combinations of preferences via superpositions and
entanglement and the destruction of the quantum state upon
measurement. This section describes some economic scenarios that
could benefit from these properties.

As one economic application, quantum auctions provide a natural way
to solve the \emph{allocative externality}
problem~\cite{jehiel05,ranger05}. In this situation, a bidder's
value for an item depends on the items received by other bidders.
For example, consider companies bidding on a big government project
requiring multiple companies to work on different parts. Allocative
externality refers to the issue that the costs for a company which
wins a contract for one part depends on which other companies win
other parts. So company A may be willing to bid more aggressively if
it knows that company B will work on related parts. Multiple
simultaneous auctions for separate parts will not handle these
interdependencies and thus will be inefficient. One possible
solution is to let companies form partnership bids. That is joint
bids that are accepted together or not at all. Quantum information
processing allows for a natural way of forming partnership bids via
entanglement. With the protocol described in \sect{Multiple},
multiple bids can be entangled so they will either all be accepted
together or none will be. Furthermore, quantum auctions may provide
more flexibility with respect to information privacy of partnership
bids than classical methods.

Specifically, with multiple items, groups of bidders could select
joint operators on their combined qubits, allowing them to express
joint constraints (e.g., where they either all win their specified
items or none of them do) without any of the other bidders or
auctioneer knowing this choice. The bidders do so by creating an
entangled state instead of the factored form for their qubits.  Thus
employing quantum entanglement provides bidders a natural way for
expressing any allocative externality. This possibility shows
bidding languages based on qubits are highly expressive and compact
because bidders can use the same bits to express their individual
bids and joint bids via entanglement.

\begin{eg}
Alice and Bob could jointly form the state
\begin{equation}
\frac{1}{\sqrt{3}} ( \ket{\zero,0,\zero,0} +
\ket{I_A,b_{A},I_B,b_{B}} + \ket{I_C,b_{C},I_D,b_{D}} )
\end{equation}
to represent the bidders willing to pay $b_{A}$ and $b_{B}$ for
items $I_A$ and $I_B$, or to pay $b_{C}$ and $b_{D}$ for items $I_C$
and $I_D$, but they are not willing to buy other combinations, such
as $I_A$ for Alice and $I_D$ for Bob.
\end{eg}
In this scenario, a direct representation of bids, i.e., without a
null bid, would not guarantee the joint preferences are satisfied
for all entangled bidders or none of them. That is, without null
bids, the superposition could not express the joint preference
through entanglement.

A group of $k$ bidders operating jointly on their qubits to form
entangled bids could also produce initial amplitudes involving up to
$k$-bidder deviation states. However the discussion with
\eq{multiItemCollusion} on multiple item auctions shows our protocol
can handle all deviation states a group of up to $\nBidders-1$
bidders can produce, i.e., by mapping them to infeasible outcomes.
Thus the additional expressivity used for joint bids does not
introduce additional opportunities for collusion to change the
outcomes via initial amplitude selection.
% terminology: "collusion-proof" for mechanisms with no incentive to collude, "collusion-free" is an analysis assumption that people do not choose to collude

A second economic application for quantum auctions arises from their
privacy guarantee for losing bids. This property is economically
useful when bidders have incentives to hide information. An example
is a scenario in which companies are bidding for government
contracts year after year. A company's bid usually contains
information about its cost structure. If there is reasonable
expectation that the losing bids will be revealed, a company may
want to bid less aggressively to reduce the amount of information
passed to its competition for use in future auctions. This will lead
to a less efficient auction than if bidders reveal their true
values. In this situation, a privacy guarantee on the losing bids
enables bidders to bid with less inhibition. More generally, this
privacy issue is only relevant when there are additional
interactions between these companies after the auction is concluded,
such as future auctions or negotiations where participants may be at
a disadvantage if their values are known to others.

This strong privacy property is unique to quantum information
processing. Privacy can be enforced via cryptographic methods for
multi-player computation~\cite{goldreich98}, and in an auction can
keep losing bids secret~\cite{naor99}. However, the information on
the bids, and the key to decrypt them, remains after the auction
completes. People who have access to the key may be legally
compelled to reveal the information or choose to sell it. So while
cryptography can be secured computationally, it cannot guarantee the
integrity of the person(s) who have the means to decrypt the
information. On the other hand, the quantum method destroys losing
bids during the search for the winning one and it is physically
impossible to reconstruct the bids after the auction process.
Similarly, some of the other properties of quantum auctions, such as
correlations for partnership bids, can be provided
classically~\cite{meyer04}. Moreover, quantum mechanisms are readily
simulated classically~\cite{vanenk02} (as long as they involve at
most 20 to 30 qubits). However, these classical approaches lack the
information security of quantum states. More study is needed to
determine scenarios where the privacy property of the quantum
protocol is significant.

\section{Discussion}\sectlabel{discussion}

This paper describes a quantum protocol for auctions, gives a game
theory analysis of some strategic issues the protocol raises and
suggests economic scenarios that could benefit from these auctions.
These include the privacy of bids and the possibility of addressing
allocative externalities. The search used in our protocol can use
arbitrary criteria for evaluating allocations, thereby implementing
other types of auctions with quantum states. Thus while we focus our
attention on the first-price sealed-bid auction, the protocol is
more general: it can implement other pricing and allocation rules,
as well as multiple-unit-multiple-item auctions with combinatorial
bids. For example we can use this protocol in a multiple stage,
iterative auction. In fact, the protocol supports general bidding
languages.

Encoding bids in quantum states raises new game theory issues
because the bidders' strategic choices include specifying amplitudes
in the quantum states. The auction is not only probabilistic, but
the winning probability is not just a function of the amount bid.
Instead a bidder can change the probability of winning by altering
the amplitudes of the quantum states encoding his bid. For example,
in the context of the first-price sealed-bid auction, the auction
does not guarantee the allocation of the item to the highest bidder.

We show that the correct design of the protocol can solve a specific
version of this incentive problem. The salient design feature is an
incentive compatible mechanism so that bidders do not want to cheat,
as opposed to an algorithmic secure protocol that prevents bidders
from cheating. Thus, our design is an example of a quantum
algorithm, in this case adiabatic search, tuned to improve incentive
issues rather than the usual focus in quantum information processing
on computation or security properties of algorithms.

In addition, we show that the Nash equilibrium of the corresponding
classical first-price sealed-bid auction is an
$\epsilon$-equilibrium of the quantum auction and that $\epsilon$
converges to zero when the quantum search associated with the
protocol uses an increasing number of steps, under the conditions
listed in Theorem~\ref{thm.NE}. This result is with respect to
changes in the initial state of the search. It remains to be seen
whether other bidder strategies give some unilateral benefit,
requiring further adjustments to the auction design.

There are multiple directions for future work. First, we plan a
series of human subject experiments on whether people can indeed bid
effectively in the simple quantum auction scenario described in this
paper. As with previous experiments with a quantum public goods
mechanism~\cite{chen06}, such experiments are useful tests of the
applicability of game theory in practice, and also suggest useful
training and decision support tools. In particular, people's
behavior in a quantum auction could differ from game theory
predictions that people select a Nash equilibrium based on idealized
assumptions of human rationality and full ability to evaluate
consequences of strategic choices with uncertainty.

Second, we plan to extend studies of quantum auctions to more
complicated economic scenarios, such as one with allocative
externality. Our analysis considers a single auction. An interesting
extension is to a series of auctions for similar items. If auctions
are repeated, the game theory analysis is more
complicated~\cite{wilson92}. In particular, privacy concerns become
more significant since information revealed by a bidder's behavior
in one auction may benefit other bidders in later auctions.

The quantum auction destroys all information about the losing bids.
As a result, it is not possible to conduct after-the-fact audits to
verify that the auction has been conducted correctly. Is there a way
to modify the mechanism to enable audits while preserving some of
the privacy guarantees? Security is another interesting issue. For
example, there may be third parties, aside from the auctioneer and
bidders who are interested in intercepting and changing bits in
transit. Auctioneers may have incentives to detect a bidder's bid or
skew auction results. The question is whether we can build security
around the protocol to prevent or at least detect these types of
attacks.

Similarly, many economics issues surrounding the protocol remain to
be resolved. For example, people behave as if they are risk averse
in auction situations~\cite{cox88,chen98} which can change the
predictions of game theory. Another issue arises from the
possibility of multiple Nash equilibria. We have only shown that the
desirable outcome is \emph{an} equilibrium. The quantum protocol can
also have other equilibria. Since the Nash equilibrium concept alone
does not indicate how people select one equilibrium over another,
additional study is needed to determine when the desirable outcome
is likely to occur.

Our protocol makes only limited use of quantum states, in
particular encoding bids in the subspace selected by the bidders
but not using the amplitudes separately. Thus it would be
interesting to examine extensions to the protocol exploiting the
wider range of options for bidders. For example, a protocol might
use amplitudes of superpositions to indicate a bidder's
probabilistic preferences, say, as in constructing a portfolio of
items with various expected values and risks. Such portfolios
could be useful if bidders have some uncertainty in their values
(e.g., in bidding for oil field exploration rights) rather than
the standard private value framework considered in this paper,
where bidders know their own values for the items. With uncertain
values, probabilistic bids could allow bidders to match their risk
preferences along with their value estimates within the auction
process.

As a final note, the number of qubits necessary to conduct an
auction is small compared to the requirement of complex
computations such as factoring. For example, if each bidder uses 7
bits (corresponding to $2^7$ or about 100 bid values) and there
are 3 bidders, about 25 qubits are needed, considerably less than
thousands needed for factoring interesting-sized numbers. Thus
with the advancement of quantum information processing
technologies, economics mechanisms could be early feasible
applications.

\small
\section*{Acknowledgments}
We have benefited from discussions with Raymond Beausoleil, Saikat
Guha, Philip Kuekes, Andrew Landahl and Tim Spiller. This work was
supported by DARPA funding via the Army Research Office contract
\#W911NF0530002 to Dr. Beausoleil. This paper does not necessarily reflect the position or the
policy of the Government funding agencies, and no official
endorsement of the views contained herein by the funding agencies
should be inferred.

%\bibliography{received}\bibliographystyle{plain}

\begin{thebibliography}{10}

\bibitem{boyer96}
Michel Boyer, Gilles Brassard, Peter Hoyer, and Alain Tapp.
\newblock Tight bounds on quantum searching.
\newblock In T.~Toffoli et~al., editors, {\em Proc. of the Workshop on Physics
  and Computation (PhysComp96)}, pages 36--43, Cambridge, MA, 1996. New England
  Complex Systems Institute.

\bibitem{chen06}
Kay-Yut Chen and Tad Hogg.
\newblock How well do people play a quantum prisoner's dilemma?
\newblock {\em Quantum Information Processing}, 5:43--67, 2006.

\bibitem{chen03}
Kay-Yut Chen, Tad Hogg, and Raymond Beausoleil.
\newblock A quantum treatment of public goods economics.
\newblock {\em Quantum Information Processing}, 1:449--469, 2002.
\newblock arxiv.org preprint quant-ph/0301013.

\bibitem{chen98}
Kay-Yut Chen and Charles~R. Plott.
\newblock Nonlinear behavior in sealed bid first price auctions.
\newblock {\em Games and Economic Behavior}, 25:34--78, 1998.

\bibitem{cox88}
James~C. Cox, Vernon~L. Smith, and James~M. Walker.
\newblock Theory and individual behavior of first-price auctions.
\newblock {\em Journal of Risk and Uncertainty}, 1:61--99, 1988.

\bibitem{cramton06}
Peter Cramton, Yoav Shoham, and Richard Steinberg, editors.
\newblock {\em Combinatorial Auctions}.
\newblock MIT Press, 2006.

\bibitem{du01}
Jiangfeng Du et~al.
\newblock Entanglement enhanced multiplayer quantum games.
\newblock {\em Physics Letters A}, 302:229--233, 2002.
\newblock arxiv.org preprint quant-ph/0110122.

\bibitem{du02}
Jiangfeng Du et~al.
\newblock Experimental realization of quantum games on a quantum computer.
\newblock {\em Physical Review Letters}, 88:137902, 2002.
\newblock arxiv.org preprint quant-ph/0104087.

\bibitem{durr96}
Christoph Durr and Peter Hoyer.
\newblock A quantum algorithm for finding the minimum.
\newblock arxiv.org preprint quant-ph/9607014, 1996.

\bibitem{eisert98}
J.~Eisert, M.~Wilkens, and M.~Lewenstein.
\newblock Quantum games and quantum strategies.
\newblock {\em Physical Review Letters}, 83:3077--3080, 1999.
\newblock arxiv.org preprint quant-ph/9806088.

\bibitem{eisert00}
Jens Eisert and Martin Wilkens.
\newblock Quantum games.
\newblock {\em J. Modern Optics}, 47:2543--2556, 2000.
\newblock arxiv.org preprint quant-ph/0004076.

\bibitem{farhi01}
Edward Farhi et~al.
\newblock A quantum adiabatic evolution algorithm applied to random instances
  of an {NP}-complete problem.
\newblock {\em Science}, 292:472--476, 2001.

\bibitem{goldreich98}
O.~Goldreich.
\newblock Secure multi-party computation.
\newblock working draft version 1.1, 1998.
\newblock Available at philby.ucsd.edu/cryptolib/books.html.

\bibitem{grover96}
Lov~K. Grover.
\newblock Quantum mechanics helps in searching for a needle in a haystack.
\newblock {\em Physical Review Letters}, 79:325--328, 1997.
\newblock arxiv.org preprint quant-ph/9706033.

\bibitem{hogg00}
Tad Hogg.
\newblock Quantum search heuristics.
\newblock {\em Physical Review A}, 61:052311, 2000.
\newblock Preprint at
  pub\-lish.aps.org/eprint/gate\-way/eplist/aps1999oct19\_002.

\bibitem{hogg03}
Tad Hogg.
\newblock Adiabatic quantum computing for random satisfiability problems.
\newblock {\em Physical Review A}, 67:022314, 2003.
\newblock arxiv.org preprint quant-ph/0206059.

\bibitem{huberman03}
Bernardo~A. Huberman and Tad Hogg.
\newblock Quantum solution of coordination problems.
\newblock {\em Quantum Information Processing}, 2:421--432, 2003.
\newblock arxiv.org preprint quant-ph/0306112.

\bibitem{jehiel05}
Philippe Jehiel and Benny Moldovanu.
\newblock Allocative and informational externalities in auctions and related
  mechanisms.
\newblock Technical Report SFB/TR 15 142, Free University of Berlin.
\newblock available at ideas.repec.org/p/trf/wpaper/142.html LOCATION =.

\bibitem{meyer04}
David~A. Meyer.
\newblock Quantum communication in games.
\newblock In S.~M. Barnett et~al., editors, {\em Quantum Communication,
  Measurement and Computing}, volume 734, pages 36--39. AIP Conference
  Proceedings, 2004.

\bibitem{milgrom82}
Paul~R. Milgrom and Robert~J. Weber.
\newblock A theory of auctions and competitive bidding.
\newblock {\em Econometrica}, 50:1089--1122, 1982.

\bibitem{mura03}
Pierfrancesco~La Mura.
\newblock Correlated equilibria of classical strategic games with quantum
  signals.
\newblock arxiv.org preprint quant-ph/0309033, Sept. 2003.

\bibitem{naor99}
Moni Naor, Benny Pinkas, and Reuben Sumner.
\newblock Privacy perserving auctions and mechanism design.
\newblock In {\em Proc. of the {ACM} Conference on Electronic Commerce}, pages
  129--139, NY, 1999. ACM Press.

\bibitem{chuang00}
Michael~A. Nielsen and Isaac~L. Chuang.
\newblock {\em Quantum Computation and Quantum Information}.
\newblock Cambridge Univ. Press, 2000.

\bibitem{radner80}
Roy Radner.
\newblock Collusive behavior in noncooperative epsilon-equilibria of
  oligopolies with long but finite lives.
\newblock {\em J. of Economic Theory}, 22:136--154, 1980.

\bibitem{ranger05}
Martin Ranger.
\newblock The generalized ascending proxy auction in the presence of
  externalities.
\newblock Technical report, Social Science Research Network, July 2005.
\newblock available at ssrn.com/abstract=834785.

\bibitem{shor94}
Peter~W. Shor.
\newblock Algorithms for quantum computation: Discrete logarithms and
  factoring.
\newblock In S.~Goldwasser, editor, {\em Proc. of the 35th Symposium on
  Foundations of Computer Science}, pages 124--134, Los Alamitos, CA, November
  1994. IEEE Press.

\bibitem{vanenk02}
S.~J. van Enk and R.~Pike.
\newblock Classical rules in quantum games.
\newblock {\em Physical Review A}, 66:024306, 2002.

\bibitem{wilson92}
Robert Wilson.
\newblock Strategic analysis of auctions.
\newblock In Robert Aumann and Sergiu Hart, editors, {\em Handbook of Game
  Theory with Economics Applications}, volume~1. Elsevier, 1992.
\newblock Chapter 8.

\end{thebibliography}

\end{document}